\newcommand{\Tr}{\mathrm{Tr}}
\newcommand{\I}{\hat I}
\theoremstyle{plain}
\newtheorem{theorem}{Theorem}[section]
\newtheorem{proposition}[theorem]{Proposition}
\newtheorem{lemma}[theorem]{Lemma}
\newtheorem{corollary}[theorem]{Corollary}
\begin{document}

\title{Dual Effective Field Theory formulation of Metric--Affine and Symmetric Teleparallel Gravity}
\author{Ginés R. Pérez Teruel}
\affiliation{Consellería de Educación, Cultura, Universidades y Empleo, Ministerio de Educación y Formación Profesional, Spain}
\email{gines.landau@gmail.com}

\begin{abstract}
We develop a unified algebraic and effective field theory (EFT) formulation for
non--Riemannian extensions of General Relativity with an independent connection.
For metric--affine $f(R,Q)$ gravity we show that the connection equations admit
an exact matrix solution, whose square--root structure generates a convergent
binomial/Neumann expansion in powers of the stress tensor $T_{\mu\nu}$.  For the
Eddington--inspired Born--Infeld (EiBI) theory we show that the connection can
be solved algebraically as well, and that its determinantal field equations
produce a parallel Neumann expansion with coefficients fixed by the underlying
determinant operator.  This allows us to rewrite the Einstein--like equations in
the auxiliary metric as an effective Einstein equation for $g_{\mu\nu}$ with a
local algebraic correction $(\Delta T)_{\mu\nu}$ that follows from a dual EFT
built from the invariants $\{T,\,T^2,\,T_{\mu\nu}T^{\mu\nu},\ldots\}$, organised
by a characteristic density scale.  We prove a convergence criterion based on
the spectral radius of $\hat T^\mu{}_\nu$ and interpret EiBI gravity as a
determinantal resummation of the same $T$--tower. Extending the framework to
symmetric teleparallel $f(Q)$ gravity, we identify the EFT coefficients in terms
of $f_Q$ and $f_{QQ}$ and present a background matching for
$f(Q)=Q+\alpha Q^2$.  The resulting dual EFT provides a common algebraic
language for metric--affine, Born--Infeld and non--metricity gravities.
\end{abstract}

\maketitle
\tableofcontents

\section{Introduction}
Metric--affine (Palatini) gravity provides a geometrical framework in which
the metric $g_{\mu\nu}$ and the affine connection $\Gamma^\alpha{}_{\mu\nu}$ are treated as
independent dynamical variables~\cite{Ferraris1982,Hehl1995,SotiriouLiberati2006,Olmo2011,CapozzielloDeLaurentis2015}.
In its simplest realization, where the gravitational Lagrangian is a function
of the curvature invariants $R=g^{\mu\nu}R_{\mu\nu}(\Gamma)$ and
$Q=R_{\mu\nu}(\Gamma)R^{\mu\nu}(\Gamma)$,
the field equations remain second order and
admit an auxiliary--metric formulation
without introducing new propagating degrees of freedom~\cite{Ferraris1982,Olmo2005,Olmo2009,Olmo2011}.
This property contrasts with the metric $f(R)$ theories, whose fourth--order
dynamics can be recast as a scalar--tensor representation~\cite{SotiriouFaraoni2010,DeFeliceTsujikawa2010}.
In the Palatini case, by contrast, the independent connection can be
\emph{algebraically} eliminated, yielding an Einstein--like system with modified,
local matter couplings~\cite{Borowiec2012,OlmoRubieraGarcia2015,Borowiec2016,BeltranJimenez2018}.

Such theories have attracted interest both as minimal
extensions of General Relativity (GR) and as effective models arising from
Born--Infeld and string--inspired actions~\cite{DeserGibbons1998, Vollick2004, Vollick2005, BanadosFerreira2010, EscamillaRivera2012, Pani2012, Cho2021}. In particular, the Eddington--inspired Born--Infeld (EiBI) model 
\cite{BanadosFerreira2010,Scargill2012,Pani2012,Avelino2012,OlmoRubieraGarcia2013,OlmoRubieraGarcia2015}
belongs to the class of metric--affine theories whose connection field
equations are algebraic and yield an Einstein--like dynamics for an
auxiliary metric. It yields second--order field equations that
recover GR in vacuum but exhibit non--trivial matter--coupled dynamics,
with implications ranging from regular cosmological bounces
to compact--object phenomenology~\cite{Delsate2012,Pani2012,Sham2012,OlmoRubieraGarcia2013,Afonso2018,Pani2019,Jimenez2019}.

More recently, symmetric teleparallel theories based on the non--metricity scalar
$Q$~\cite{JimenezHeisenbergKoivisto2018,Jimenez2019,Koivisto2019,Capozziello2022}
have further emphasized the role of independent geometric variables and
motivated a systematic comparison between
curvature--, torsion--, and non--metricity--based formulations of gravity.
In the symmetric teleparallel limit, curvature and torsion vanish while
non--metricity remains nonzero, and General Relativity is recovered for
$f(Q)=Q$ (STEGR), whereas non--linear $f(Q)$ models provide flexible
dark--energy and early--universe scenarios.

The algebraic solvability of the Palatini connection equations,
originally observed by Ferraris and collaborators~\cite{Ferraris1982,Ferraris1988},
suggests that modifications of GR in this framework can be understood
as \emph{local, algebraic self--interactions of the matter stress tensor}
rather than as higher--derivative curvature corrections.
This observation has been explored in various contexts
(e.g.\ Refs.~\cite{Borowiec2012,OlmoRubieraGarcia2015,Borowiec2016,BeltranJimenez2018,BeltranJimenezKoivisto2020,Delhom2020,Afonso2018,OlmoRubieraGarcia2020,Baldazzi2021,HeisenbergEFT}),
yet its full analytic potential remains largely unexploited, and a unified
view including non--metricity--based gravity is still missing.

\smallskip

In this work we make this statement precise by constructing a
\emph{dual effective field theory (EFT)} representation for a broad class
of non--Riemannian gravities with an independent connection.
First, we revisit Ricci--based metric--affine $f(R,Q)$ models and the
Eddington--inspired Born--Infeld (EiBI) theory. 
Although EiBI is not a special case of $f(R,Q)$, it shares the same
Palatini algebraic structure and can be treated within the same analytic
framework.
Starting from the exact matrix solution associated with the Ricci tensor,
as derived in~\cite{PerezTeruel2013,PerezTeruel2014}, we show that, under a mild spectral condition on $\hat T^\mu{}_\nu$,
the square--root matrix entering the Palatini map admits a convergent
binomial series in powers of the stress--energy tensor $T_{\mu\nu}$.

We then prove a rigorous convergence lemma based on the spectral radius
of $\hat T^\mu{}_\nu$ and derive an auxiliary metric $h_{\mu\nu}$ as a
local power series in $\{g_{\mu\nu},T_{\mu\nu},(T^2)_{\mu\nu},\ldots\}$.
This construction allows us to map the Einstein--like equations for $h_{\mu\nu}$
to an effective Einstein equation for $g_{\mu\nu}$ with a local algebraic
correction $(\Delta T)_{\mu\nu}$, which in turn follows from an EFT action
built from the invariants $\{T,\,T^2,\,T_{\mu\nu}T^{\mu\nu},\ldots\}$ 
organized by a characteristic density scale. \\

Second, we extend this dual EFT framework beyond Ricci--based metric--affine
gravity.  On the one hand, we show that EiBI gravity corresponds to a
determinantal resummation of the same $T$--tower, providing a clear
dictionary between the Born--Infeld parameters and the EFT coefficients.
On the other hand, we outline a dual formulation for symmetric teleparallel
$f(Q)$ theories in which the non--metricity sector can be encoded in a
constitutive map $\hat T=\chi^\ast T$, and we derive the background matching
between the EFT coefficient $a_2(w)$ and the quadratic coupling of
$f(Q)=Q+\alpha Q^2$ in FLRW cosmology.  Throughout, we illustrate the formalism
with applications to cosmological fluids, electromagnetic fields, and compact stars.\\

Overall, the results reveal a common algebraic mechanism underlying
Palatini $f(R,Q)$ gravity, EiBI models, and symmetric teleparallel $f(Q)$
gravity: in the former two cases the independent connection is eliminated
by purely algebraic field equations, while in the symmetric teleparallel
case it can be fixed by gauge choice, and in all three frameworks the
resulting dynamics can be recast (at least perturbatively) as a local
dual EFT in powers of $T_{\mu\nu}$.

\section{Matrix formulation of Palatini $f(R,Q)$}

We consider the action
\begin{equation}
S[g,\Gamma,\psi]=\frac{1}{2\kappa^2}\!\int d^4x\,\sqrt{-g}\,f(R,Q)+S_m[g,\psi],
\end{equation}

where $R=g^{\mu\nu}R_{\mu\nu}(\Gamma)$, $Q=R_{\mu\nu}(\Gamma)R^{\mu\nu}(\Gamma)$, and with independent $g$ and $\Gamma$. Variation yields the matrix equation
\begin{equation}
2 f_Q \hat P^2 + f_R \hat P - \frac{1}{2} f \I = \kappa^2 \hat T,
\label{eq:matrix}
\end{equation}
where $(\hat P)^\mu{}_\nu \equiv R^{\mu\alpha}g_{\alpha\nu}$ and $(\hat T)^\mu{}_\nu \equiv T^\mu{}_\nu$.
Equation \eqref{eq:matrix} admits the exact solution
\begin{equation}
\hat P = -\frac{1}{4 f_Q}\Big(f_R \I - 2\sqrt{\alpha \I + \beta \hat T}\Big),\qquad
\alpha=\tfrac14(f_R^2+4f_Q f),\quad \beta=2\kappa^2 f_Q.
\label{eq:Psolution}
\end{equation}
Defining
\begin{equation}
\Sigma \equiv f_R \I + 2 f_Q \hat P = \frac{f_R}{2}\I + \sqrt{\alpha \I + \beta \hat T},
\label{eq:SigmaDef}
\end{equation}
the independent connection is Levi--Civita of an auxiliary metric $h_{\mu\nu}$ such that
\begin{equation}
h^{-1}=\frac{g^{-1}\Sigma}{\sqrt{\det\Sigma}},\qquad
R^\lambda{}_\mu(h)=\frac{1}{\sqrt{\det\Sigma}}\left(\frac{f}{2}\delta^\lambda_\mu+\kappa^2 T^\lambda{}_\mu\right).
\label{eq:h-relations}
\end{equation}

In this work we restrict to matter Lagrangians 
$\mathcal{L}_m(g,\psi)$ that do not depend explicitly on the independent
connection, so that the hypermomentum 
$\Delta_{\lambda}{}^{\mu\nu}\equiv -2\,\delta\mathcal{L}_m/\delta\Gamma^\lambda{}_{\mu\nu}$ 
vanishes. This sector includes all perfect fluids, scalars, electromagnetic fields, 
and also Dirac fields in the torsionless Palatini case, for which the spin connection 
reduces to the Levi--Civita one and no independent variation with respect to 
$\Gamma^\lambda{}_{\mu\nu}$ arises.

The absence of hypermomentum implies that the connection field equation is purely
algebraic and admits the exact matrix solution \eqref{eq:Psolution}--\eqref{eq:SigmaDef}. 
With non--vanishing hypermomentum the connection acquires new algebraic sources 
(spin and non-metricity currents), and the constitutive map 
$\Sigma(T)$ is replaced by a more general object $\Sigma(T,\Delta)$ with an enlarged 
operator basis. Extending the dual EFT developed here to 
hypermomentum--carrying matter is therefore conceptually straightforward---one 
would obtain additional local operators built from $\Delta_{\lambda}{}^{\mu\nu}$ 
and mixed $T$--$\Delta$ structures in the effective action---but working out this 
enlarged basis and matching it to explicit matter sectors (spinors with independent 
spin connection, nonminimally coupled gauge fields, etc.) lies beyond the scope 
of the present paper and is left for future work.

\section{Exact power–series form (matrix binomial)}\label{sec:exact-series}

A key result, already implicit in the exact matrix solution reported in~\cite{Olmo2011} and made explicit in~\cite{PerezTeruel2013, PerezTeruel2014}, is that the square--root matrix appearing in the Palatini map admits a closed binomial series in powers of the stress--energy tensor.

Starting from the exact expression
\begin{equation}
\hat P \;=\; -\frac{1}{4f_Q}\Big(f_R \hat I - 2\sqrt{\alpha \hat I + \beta \hat T}\Big),
\qquad
\Sigma \;\equiv\; f_R \hat I + 2 f_Q \hat P \;=\; \frac{f_R}{2}\hat I + \sqrt{\alpha \hat I + \beta \hat T},
\label{eq:SigmaExactAgain}
\end{equation}
with
\(
\alpha=\tfrac14(f_R^2+4f_Q f),\; \beta=2\kappa^2 f_Q,
\)
we factor out \(\sqrt{\alpha}\) and define \(X\equiv (\beta/\alpha)\hat T\). Using the generalized binomial theorem for matrices,
\begin{equation}
\boxed{\quad
\sqrt{\alpha \hat I + \beta \hat T}
\;=\; \sqrt{\alpha}\;\sum_{n=0}^{\infty} \binom{\tfrac12}{n}\, X^{n}
\;=\; \sqrt{\alpha}\;\sum_{n=0}^{\infty} \binom{\tfrac12}{n}\left(\frac{\beta}{\alpha}\right)^{\!n}\hat T^{\,n}\quad}
\label{eq:MatrixBinomialRoot}
\end{equation}
where the binomial coefficients are
\(
\displaystyle \binom{1/2}{n}=\frac{(1/2)_n}{n!}
=\frac{(1/2)(1/2-1)\cdots(1/2-n+1)}{n!}
=\frac{(-1)^{n-1}(2n-3)!!}{2^{n} n!}\;\; (n\ge1),
\)
with \(\binom{1/2}{0}=1\) and \((2n-3)!!\) the double factorial.

\paragraph*{Convergence.}
The series \eqref{eq:MatrixBinomialRoot} converges in any submultiplicative norm whenever
\begin{equation}
\|X\| \;=\; \left\| \frac{\beta}{\alpha}\hat T \right\| \;<\;1,
\qquad\text{equivalently}\qquad
\rho(X)=\max_i \left| \frac{\beta}{\alpha}\lambda_i(\hat T)\right| < 1,
\label{eq:convergence}
\end{equation}
with \(\{\lambda_i(\hat T)\}\) the eigenvalues of \(\hat T\). For physical stress tensors (diagonalizable over \(\mathbb{R}\)) this amounts to a bound on the matter scale \(|\beta T|<|\alpha|\), i.e.\ \(\rho\ll \rho_\star\) where \(\rho_\star\) is the characteristic density of the underlying theory (e.g.\ \(\rho_p=R_p/\kappa^2\) in quadratic models).\\

The convergence condition $\rho(X)<1$ in \eqref{eq:convergence} refers to 
the spectral radius of the mixed tensor $X^\mu{}_\nu=(\beta/\alpha)T^\mu{}_\nu$, 
i.e.\ the maximum modulus of its eigenvalues. 
In physical terms this is simply a statement that the matter variables 
$\rho$ and $p$ remain below the characteristic density scale of the theory, 
since for diagonalizable stress tensors (e.g.\ perfect fluids) the spectrum of 
$T^\mu{}_\nu$ is $\{-\rho,p,p,p\}$. 
Thus $\rho(X)<1$ is equivalent to 
\[
\left|\frac{\beta}{\alpha}\right|\max\{\rho,|p|\}<1,
\]
which provides a direct and transparent physical interpretation: the 
binomial/Neumann series converges whenever the matter density lies well below 
the intrinsic scale of the underlying Palatini model. No advanced spectral 
theory is required to apply this criterion.

From a global perspective, the condition $\rho(X)<1$ also delimits the domain 
where the \emph{principal} matrix root in \eqref{eq:MatrixBinomialRoot} is 
analytic and real. As one approaches the boundary $\rho(X)\to 1$ (for instance 
in ultra--dense regimes or for exotic equations of state with large pressures) 
the eigenvalues of $\alpha I+\beta T$ can approach the negative real axis, the 
binomial/Neumann expansion ceases to converge and the truncated dual EFT 
description breaks down. In that regime one must go back to the \emph{exact} 
algebraic relation \eqref{eq:Psolution} and track carefully the branch structure 
of the matrix square root. Whether the full non--perturbative solution describes 
a bounce, a regularisation of singularities or instead a strong--coupling regime 
depends on the specific $f(R,Q)$ model and on the matter sector; the present work 
focuses on the conservative EFT domain $\rho,|p|\ll\rho_\star$, where the 
principal branch is unambiguously selected and the power--series description is 
under perturbative control.

\paragraph*{Determinant and inverse as full series.}
Writing \(\Sigma=\frac{f_R}{2}\hat I+\sqrt{\alpha}\sum_{n\ge0}\binom{1/2}{n}X^n\),
it is convenient to factor \(\Sigma=A_0(\hat I+Y)\) with
\begin{equation}
A_0 \;\equiv\; \frac{f_R}{2}+\sqrt{\alpha},\qquad
Y \;\equiv\; \sum_{n=1}^{\infty} b_n X^n,\qquad
b_n \;\equiv\; \frac{\sqrt{\alpha}}{A_0}\binom{1/2}{n}.
\end{equation}
Then
\begin{equation}
\quad
\det\Sigma \;=\; A_0^{\,4}\;\exp\!\Big(\Tr \ln(\hat I+Y)\Big),
\qquad
\ln(\hat I+Y) \;=\; \sum_{k=1}^{\infty}\frac{(-1)^{k+1}}{k}\,Y^{k},\quad
\label{eq:detSeries}
\end{equation}
and
\begin{equation}
\quad
\Sigma^{-1} \;=\; \frac{1}{A_0}\,(\hat I+Y)^{-1}
\;=\; \frac{1}{A_0}\sum_{m=0}^{\infty}(-1)^{m}\,Y^{m},\quad
\label{eq:invSeries}
\end{equation}
both understood as absolutely convergent power series whenever \eqref{eq:convergence} holds. Since \(Y\) is a formal series in powers of \(\hat T\), all products \(Y^k\) generate finite linear combinations of \(\hat T^n\) at each order \(n\).

\paragraph*{Exact power–series for \(h_{\mu\nu}\).}
Using the defining relation
\(
h^{-1}=\dfrac{g^{-1}\Sigma}{\sqrt{\det\Sigma}}
\)
one finds the exact series representation
\begin{equation}
\quad
h_{\mu\nu}
\;=\; A_0\;\Big[\;\mathcal{D}^{-1/2}\!(Y)\;\Big]_{\mu}{}^{\alpha}\,
\Big[(\hat I+Y)^{-1}\Big]_{\alpha}{}^{\beta}\; g_{\beta\nu}\quad
\label{eq:hFullSeries}
\end{equation}
where
\(
\mathcal{D}^{-1/2}\!(Y) \equiv
\Big(\exp\big[\tfrac12\,\Tr \ln(\hat I+Y)\big]\Big)^{-1}\hat I
\)
acts as a scalar (times \(\hat I\)) that can be expanded with \eqref{eq:detSeries}. Combining \eqref{eq:detSeries} and \eqref{eq:invSeries} yields, to \emph{all} orders, a unique covariant series for \(h_{\mu\nu}\) in the tensor basis \(\{g_{\mu\nu},\,T_{\mu\nu},\,(T^2)_{\mu\nu},\ldots\}\).

Equations \eqref{eq:MatrixBinomialRoot}–\eqref{eq:hFullSeries} provide a compact, closed generating–function form of the Palatini map. In practice, for phenomenology one truncates these series at some finite order. In a next subsection we display the explicit expansion up to $\mathcal O(T^3)$. Now we give more details about the convergence of the matrix binomial expansion.
\subsection{Convergence of the matrix binomial expansion}

\begin{lemma}[Absolute convergence of the matrix binomial series]
\label{lemma:BinomialConvergence}
Let $X$ be a linear endomorphism on a finite-dimensional vector space and $\|\cdot\|$ any submultiplicative matrix norm.
If $\|X\|<1$, then the series
\[
\sum_{n=0}^{\infty}\binom{1/2}{n}\,X^n
\]
converges absolutely and defines the principal square root $(I+X)^{1/2}$. 
Moreover, convergence also holds whenever the spectral radius $\rho(X)<1$.
\end{lemma}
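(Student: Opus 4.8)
The plan is to reduce the whole statement to the elementary fact that the scalar series $(1+t)^{1/2}=\sum_{n\ge0}\binom{1/2}{n}t^{n}$ has radius of convergence exactly $1$, so that $\sum_{n\ge0}\bigl|\binom{1/2}{n}\bigr|\,r^{n}<\infty$ for every $0\le r<1$. First I would use submultiplicativity, $\|X^{n}\|\le\|X\|^{n}$, to bound $\sum_{n\ge0}\bigl\|\binom{1/2}{n}X^{n}\bigr\|\le\sum_{n\ge0}\bigl|\binom{1/2}{n}\bigr|\,\|X\|^{n}<\infty$ whenever $\|X\|<1$; since the algebra of endomorphisms of a finite–dimensional space is a Banach space, absolute convergence forces convergence. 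Call the limit $S$.

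Next I would identify $S$ with the principal square root. There are two routes. (a) Form the Cauchy product $S^{2}$: the coefficient of $X^{m}$ is $\sum_{j+k=m}\binom{1/2}{j}\binom{1/2}{k}=\binom{1}{m}$ by the Chu--Vandermonde identity for generalized binomial coefficients, which equals $1$ for $m\in\{0,1\}$ and $0$ otherwise, hence $S^{2}=I+X$ exactly; since $S=I+\tfrac12X+\cdots$ is a power series in $X$, it commutes with $X$, and for diagonalizable $X$ its eigenvalues are $(1+\lambda)^{1/2}$ with $|\lambda|\le\rho(X)\le\|X\|<1$, so $1+\lambda$ lies in the disk of radius $1$ about $1$ and the scalar principal root has positive real part; thus $S$ is the principal branch. (b) More cleanly and without a diagonalizability hypothesis, note $S=g(X)$ with $g(z)=(1+z)^{1/2}$ holomorphic on the open unit disk, which contains $\sigma(X)$; the holomorphic functional calculus gives $S^{2}=(g^{2})(X)=I+X$ and selects the principal branch because $g$ maps the disk into the open right half-plane (this handles Jordan blocks, on which $g$ acts as an upper-triangular matrix with $(1+\lambda)^{1/2}$ on the diagonal).

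For the spectral-radius case I would either (i) invoke the standard fact that for every $\varepsilon>0$ there is a submultiplicative operator norm with $\|X\|_{\varepsilon}\le\rho(X)+\varepsilon$; choosing $\varepsilon$ so that $\rho(X)+\varepsilon<1$ reduces the problem to the previous paragraph, and equivalence of norms on a finite-dimensional space transports convergence back to the original norm; or (ii) use Gelfand's formula $\lim_{n}\|X^{n}\|^{1/n}=\rho(X)$ to get $\|X^{n}\|\le r^{n}$ for all large $n$ with some $r\in(\rho(X),1)$, and combine this with the crude bound $\bigl|\binom{1/2}{n}\bigr|\le\tfrac12$ for $n\ge1$ (immediate from the double-factorial formula recorded above \eqref{eq:MatrixBinomialRoot}), so that the tail $\sum_{n\ge N}\bigl|\binom{1/2}{n}\bigr|\,\|X^{n}\|$ is dominated by a convergent geometric series.

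The one point that genuinely requires care is the identification in the second paragraph: the Cauchy-product computation only yields $S^{2}=I+X$, and a generic $(I+X)$ possesses many matrix square roots, so "principal" must be pinned down by a spectral argument placing $\sigma(S)$ in the open right half-plane. I expect the holomorphic functional calculus to be the most economical way to secure this, since it simultaneously gives $S^{2}=I+X$, the branch selection, and the non-diagonalizable case at once. Everything else---the coefficient bound, the comparison with the scalar binomial series, and the norm/spectral-radius interplay---is routine.
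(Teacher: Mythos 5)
Your proposal is correct and follows essentially the same route as the paper: absolute convergence by comparing with the scalar binomial series via submultiplicativity, identification with the principal root through the holomorphic functional calculus (with the branch pinned down by $\sigma(I+X)\subset\mathbb{C}\setminus(-\infty,0]$), and the $\rho(X)<1$ case handled by passing to an equivalent norm with $\|X\|<1$ --- exactly the argument sketched in the paper's proof and carried out in detail in its Appendix~\ref{app:rigorous-convergence}. The Cauchy--product/Chu--Vandermonde alternative you mention is a fine extra, but your preferred route (b) is precisely the paper's.
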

\begin{proof}[Sketch of proof]
Since $\sum_{n\ge0}\big|\binom{1/2}{n}\big|\,\|X\|^n$ converges for $\|X\|<1$ 
(it is dominated by a geometric series), the matrix series converges absolutely 
in any submultiplicative norm.  
Analytic functional calculus then implies that the sum coincides with the 
holomorphic function $f(X)$ with $f(z)=\sqrt{1+z}$ on the principal branch.
Since $\rho(X)<1$, one can always choose a matrix norm (equivalent to the usual 
operator norms) such that $\|X\|<1$; equivalently, all eigenvalues of $X$ lie 
strictly inside the open unit disk.  
This ensures absolute convergence of the binomial series.
\end{proof}
\begin{theorem}[Convergence for the Palatini map]\label{thm:PalatiniConvergence}
Let
\[
\Sigma \;=\; \frac{f_R}{2}\,I\;+\;\sqrt{\alpha I+\beta T}
\;=\; A_0\!\left(I+\sum_{n\ge1} b_n X^n\right),\qquad
X\equiv \frac{\beta}{\alpha}\,T,\quad 
b_n=\frac{\sqrt{\alpha}}{A_0}\binom{1/2}{n},
\]
where $\alpha=\tfrac14(f_R^2+4f_Q f)$, $\beta=2\kappa^2 f_Q$, 
$A_0=\frac{f_R}{2}+\sqrt{\alpha}$,
and $T^\mu{}_\nu$ is the mixed stress--energy tensor.
Fix the principal branch of the square root.  
If $\rho(X)<1$ (equivalently, $\max_i |\beta\,\tau_i/\alpha|<1$ for the eigenvalues $\{\tau_i\}$ of $T$), then:
\begin{enumerate}
\item The binomial series for $\sqrt{\alpha I+\beta T}$ converges absolutely to the principal root.
\item The series defining $\det\Sigma = A_0^{4}\exp[\Tr\ln(I+\sum_{n\ge1} b_n X^n)]$ 
and $\Sigma^{-1} = A_0^{-1}\sum_{m\ge0}(-1)^m(\sum_{n\ge1} b_n X^n)^m$ 
converge absolutely.
\item Consequently, the auxiliary metric  
\(
h_{\mu\nu}=g_{\mu\alpha}\Sigma^\alpha{}_\nu / \sqrt{\det\Sigma}
\)
admits a convergent expansion on the tensor basis 
$\{g_{\mu\nu},\,T_{\mu\nu},\,(T^2)_{\mu\nu},\ldots\}$.
\end{enumerate}
\end{theorem}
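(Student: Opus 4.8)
The plan is to establish the three claims in order, reducing each to Lemma~\ref{lemma:BinomialConvergence} together with standard facts about absolutely convergent matrix power series (iterated Cauchy products, Mertens' theorem) and the holomorphic functional calculus. Throughout I assume the mild nondegeneracy conditions $\alpha\neq0$ and $A_0=\tfrac{f_R}{2}+\sqrt\alpha\neq0$, which in particular hold in the GR limit ($f_R=1$, $f_Q=0$, $\alpha=\tfrac14$, $A_0=1$) and its neighbourhood.

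\emph{Step 1 — the square root.} Factor $\alpha I+\beta T=\alpha\,(I+X)$ with $X=(\beta/\alpha)\hat T$. Since $\rho(X)<1$, Lemma~\ref{lemma:BinomialConvergence} applies: choosing a submultiplicative norm with $\|X\|<1$, the spectrum of $I+X$ lies in the open disk $\{|z-1|<1\}$, which is contained in $\{\mathrm{Re}\,z>0\}$ and hence disjoint from the branch cut $(-\infty,0]$ of $z\mapsto\sqrt z$. The principal square root $(I+X)^{1/2}$ is therefore defined by functional calculus and, by the lemma, equals the absolutely convergent series $\sum_{n\ge0}\binom{1/2}{n}X^{n}$. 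Multiplying by the scalar $\sqrt\alpha$ (principal branch) gives claim~1 and fixes $b_n=(\sqrt\alpha/A_0)\binom{1/2}{n}$.

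\emph{Step 2 — determinant and inverse.} Write $\Sigma=A_0(I+Y)$ with $Y=\sum_{n\ge1}b_nX^{n}$, an absolutely convergent series in $\hat T$ with vanishing constant term. Using the elementary identity $\sum_{n\ge1}\big|\binom{1/2}{n}\big|\,t^{n}=1-\sqrt{1-t}\le t$ for $t\in[0,1)$, one obtains $\|Y\|\le(\sqrt{|\alpha|}/|A_0|)\,\|X\|$ in the chosen norm. When $f_R>0$ and $\alpha>0$ one checks $A_0^{2}-\alpha=f_R(\tfrac{f_R}{4}+\sqrt\alpha)>0$, so $\sqrt{|\alpha|}/|A_0|<1$ and hence $\|Y\|<1$ follows automatically from $\rho(X)<1$; more generally $\|Y\|<1$ holds in the EFT domain $\rho,|p|\ll\rho_\star$, where $\|X\|\to0$. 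Given $\|Y\|<1$, the logarithm series $\ln(I+Y)=\sum_{k\ge1}\tfrac{(-1)^{k+1}}{k}Y^{k}$ and the Neumann series $(I+Y)^{-1}=\sum_{m\ge0}(-1)^mY^{m}$ converge absolutely; taking the trace of the former and exponentiating (the exponential series converging for any bounded argument) yields the absolutely convergent representation $\det\Sigma=A_0^{4}\exp[\Tr\ln(I+Y)]$ of \eqref{eq:detSeries}, and likewise \eqref{eq:invSeries}. This is claim~2; note in passing that $\det\Sigma\neq0$.

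\emph{Step 3 — the auxiliary metric.} In \eqref{eq:hFullSeries}, $h_{\mu\nu}$ is $A_0$ times the scalar factor $\mathcal D^{-1/2}(Y)=\exp[-\tfrac12\Tr\ln(I+Y)]$, the matrix $(I+Y)^{-1}$, and the constant tensor $g_{\beta\nu}$. By Step~2 the scalar and matrix factors are absolutely convergent power series in $\hat T$, and the constant tensor does not affect convergence, so their product---computed via iterated Cauchy products of finitely many absolutely convergent series---converges absolutely. Because $Y=\sum_{n\ge1}b_nX^{n}$ is itself a series in $\hat T$ with $X=(\beta/\alpha)\hat T$, every power $Y^{k}$, when expanded, contributes only finitely many powers of $\hat T$ up to any fixed order, so Mertens' theorem permits reorganising the nested sums into a single absolutely convergent series in ascending powers of $\hat T$. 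Lowering one index with $g_{\mu\nu}$ converts $(\hat T^{n})^{\mu}{}_{\nu}$ into $(T^{n})_{\mu\nu}$, giving the claimed unique covariant expansion on $\{g_{\mu\nu},\,T_{\mu\nu},\,(T^2)_{\mu\nu},\dots\}$, which proves claim~3.

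\emph{Main obstacle.} The two genuinely delicate points are (i) certifying that the series delivers the \emph{principal} root rather than some other branch, which I handle above by locating $\mathrm{spec}(I+X)$ off $(-\infty,0]$ and invoking the functional calculus; and (ii) the passage from $\|X\|<1$ to $\|Y\|<1$, which---unlike claim~1---is not a formal consequence of $\rho(X)<1$ alone, since the prefactor $\sqrt{|\alpha|}/|A_0|$ can be large for strongly deformed models ($f_R<0$, or $f_Qf$ near $-f_R^2/4$). I would therefore state the standing hypothesis explicitly: claims~2--3 are asserted in the conservative regime $\rho,|p|\ll\rho_\star$ (equivalently under $\det\Sigma\neq0$ and $\|Y\|<1$), which is precisely the stated domain of validity of the dual EFT; outside it one must return to the exact relation \eqref{eq:Psolution} and track the branch structure of the matrix root directly.
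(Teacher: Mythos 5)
Your proposal is correct and follows essentially the same route as the paper's proof of Theorem~\ref{thm:PalatiniConvergence}: claim~1 is reduced to Lemma~\ref{lemma:BinomialConvergence} via the functional calculus on the principal branch, and claims~2--3 are obtained from the Neumann and trace--log series for $Y=\sum_{n\ge1}b_n X^n$ together with products of absolutely convergent series, exactly as in \eqref{eq:detSeries}--\eqref{eq:hFullSeries}. Where you improve on the paper is the passage from $\rho(X)<1$ to $\|Y\|<1$: the paper's proof simply asserts that ``absolute convergence of $Y$ implies $\|Y\|<1$ in the same domain'' (and the corollary in App.~\ref{app:rigorous-convergence} invokes the spectral mapping theorem while in effect dropping the prefactor $\sqrt{\alpha}/A_0$), whereas you derive the explicit bound $\|Y\|\le(\sqrt{|\alpha|}/|A_0|)\,\|X\|$ from the identity $\sum_{n\ge1}\big|\binom{1/2}{n}\big|t^n=1-\sqrt{1-t}\le t$, note that the prefactor is below unity precisely when $f_R>0$ and $\alpha>0$ (since $A_0^2-\alpha=f_R(f_R/4+\sqrt{\alpha})$), and otherwise state the required hypothesis ($\|Y\|<1$, i.e.\ the EFT domain $\rho,|p|\ll\rho_\star$) explicitly. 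This is a genuine refinement rather than a different method: as literally stated, claims~2--3 do not follow from $\rho(X)<1$ alone when $A_0$ is small (strongly deformed models with $f_R<0$ or $4f_Qf$ near $-f_R^2$), and your standing hypothesis is exactly what is needed to close that step, while your treatment of claim~1 and of the branch structure coincides with the paper's.
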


\begin{proof}
The result follows directly from Lemma~\ref{lemma:BinomialConvergence}.  
If $\rho(X)<1$, the Lemma ensures absolute convergence of the matrix binomial 
series 
\(\sum_{n\ge0}\binom{1/2}{n}X^n\),  
and therefore of 
\(Y=\sum_{n\ge1}b_n X^n\).
Absolute convergence of $Y$ implies 
$\|Y\|<1$ in the same domain, so the Neumann series 
\((I+Y)^{-1}=\sum_{m\ge0}(-1)^m Y^m\) converges absolutely, and
the trace--log series  
\(\Tr\ln(I+Y)=\sum_{k\ge1}(-1)^{k+1}\Tr(Y^k)/k\)  
also converges absolutely.
Since $\det\Sigma=A_0^{4}\exp[\Tr\ln(I+Y)]$ and 
$\Sigma^{-1}=A_0^{-1}(I+Y)^{-1}$ are products of absolutely convergent series,
their product structure in  
\(h_{\mu\nu}=g_{\mu\alpha}\Sigma^\alpha{}_\nu/\sqrt{\det\Sigma}\)  
also converges.  
\end{proof}
\begin{corollary}[Practical criterion for fluid sources]
If $T^\mu{}_\nu$ is diagonalizable with physical eigenvalues (for a perfect fluid $\{-\rho,p,p,p\}$), the convergence condition reduces to
\[
\max\!\left\{\left|\frac{\beta}{\alpha}\right|\,\rho,\ 
\left|\frac{\beta}{\alpha}\right|\,|p|\right\}<1.
\]
In quadratic models $f(R,Q)=R+(R^2+Q)/R_p$ this becomes
\(
\max\{\rho,|p|\}\ll \rho_p\equiv R_p/\kappa^2,
\)
and in EiBI gravity
\(
\max\{\rho,|p|\}\ll \rho_{\rm BI}\equiv 1/(\epsilon\kappa^2).
\)
\end{corollary}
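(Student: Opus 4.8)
\emph{Proof proposal.} The plan is to obtain all three assertions from Theorem~\ref{thm:PalatiniConvergence}, whose single hypothesis is $\rho(X)<1$ with $X^\mu{}_\nu=(\beta/\alpha)\,T^\mu{}_\nu$. First I would exploit diagonalizability: if $T^\mu{}_\nu$ is diagonalizable then so is $X^\mu{}_\nu$, with eigenvalues $(\beta/\alpha)\tau_i$ for $\{\tau_i\}$ the spectrum of $T^\mu{}_\nu$. For a perfect fluid $\{\tau_i\}=\{-\rho,p,p,p\}$ and, using $\rho\ge 0$,
\[
\rho(X)=\max_i\Big|\tfrac{\beta}{\alpha}\tau_i\Big|=\Big|\tfrac{\beta}{\alpha}\Big|\,\max\{\rho,|p|\}.
\]
In the eigenbasis one may choose a submultiplicative norm with $\|X\|=\rho(X)$, so Lemma~\ref{lemma:BinomialConvergence} applies and $\rho(X)<1$ is \emph{equivalent} to $|\beta/\alpha|\,\max\{\rho,|p|\}<1$. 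This is the first claim, and parts (2)--(3) follow by evaluating $\beta/\alpha$ in each model.

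For the quadratic Lagrangian $f(R,Q)=R+(R^2+Q)/R_p$ I would compute $f_R=1+2R/R_p$, $f_Q=1/R_p$, whence $\beta=2\kappa^2 f_Q=2\kappa^2/R_p$ and $\alpha=\tfrac14\big(f_R^2+4f_Q f\big)$. In the EFT window $\rho,|p|\ll\rho_p\equiv R_p/\kappa^2$ the algebraically induced invariants satisfy $|R|\ll R_p$ and $|Q|\ll R_p^2$ --- self-consistently with the trace of \eqref{eq:h-relations} fed back into the series --- so $f_R\to 1$, $f\to 0$ and $\alpha\to\tfrac14$. Hence $|\beta/\alpha|\to 8\kappa^2/R_p$, and the sharp bound above becomes $8\kappa^2\max\{\rho,|p|\}<R_p$, i.e.\ $\max\{\rho,|p|\}<R_p/(8\kappa^2)$; absorbing the $O(1)$ prefactor into ``$\ll$'' yields $\max\{\rho,|p|\}\ll\rho_p$.

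For EiBI I would use the algebraic solution of its connection equation together with the determinantal field equations (cf.~\cite{BanadosFerreira2010,Pani2012,OlmoRubieraGarcia2015} and the EiBI analysis of the present paper): the auxiliary metric is related to $g_{\mu\nu}$ through a matrix $\Omega^\mu{}_\nu=\delta^\mu_\nu+\epsilon\kappa^2 T^\mu{}_\nu+\cdots$, so that the parallel Neumann tower is organised by $\widetilde X^\mu{}_\nu=\epsilon\kappa^2 T^\mu{}_\nu$. Running the first step with $\widetilde X$ in place of $X$ gives $\rho(\widetilde X)=\epsilon\kappa^2\max\{\rho,|p|\}<1$, i.e.\ $\max\{\rho,|p|\}<1/(\epsilon\kappa^2)$, which is $\max\{\rho,|p|\}\ll\rho_{\rm BI}\equiv 1/(\epsilon\kappa^2)$ after the same $O(1)$ accounting.

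\textbf{Main obstacle.} The only delicate point is the passage from the exact spectral inequality to the scale hierarchies in (2)--(3): $\alpha$ and $\beta$ are themselves functions of $R,Q$, which are algebraic functions of $T$ through \eqref{eq:Psolution}--\eqref{eq:h-relations}. One must verify that in the regime $\rho,|p|\ll\rho_\star$ this back-reaction is subleading, so that $\alpha,\beta$ can be replaced by their $T\to0$ limits without affecting the leading bound. For the quadratic model the corrections are manifestly $O(\kappa^2\rho/R_p)$ and harmless; for a general $f(R,Q)$ one needs only a mild regularity assumption near $T=0$, which is precisely what the conservative hypothesis $\rho,|p|\ll\rho_\star$ encodes --- and the reason the corollary is stated with ``$\ll$'' rather than a sharp threshold.
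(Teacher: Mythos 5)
Your proposal is correct and follows essentially the paper's own route: evaluate the spectral radius of $X^\mu{}_\nu=(\beta/\alpha)T^\mu{}_\nu$ on the perfect--fluid spectrum $\{-\rho,p,p,p\}$ to get $|\beta/\alpha|\max\{\rho,|p|\}<1$, then substitute the model-specific values ($f_R\to1$, $f\to0$, $\alpha\to\tfrac14$, $\beta=2\kappa^2/R_p$ for the quadratic model, and $\alpha_{\rm BI}=\epsilon\kappa^2/\lambda$ with the Neumann/trace-log series for EiBI) to obtain the scale hierarchies $\max\{\rho,|p|\}\ll\rho_p$ and $\ll\rho_{\rm BI}$. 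Your explicit remark that the $T$--dependence of $\alpha,\beta$ (back-reaction through $R,Q$) is subleading in the window $\rho,|p|\ll\rho_\star$ is a sensible point the paper leaves implicit.
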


\paragraph*{Remarks on branch choice and non-diagonalizable cases.}
(1) We adopt the principal branch of $\sqrt{\cdot}$, which is analytic on $\mathbb{C}\setminus(-\infty,0]$.
This requires the spectrum of $I+X$ to avoid the branch cut; the condition $\rho(X)<1$ is sufficient.
(2) If $T$ is not diagonalizable, the holomorphic functional calculus applies via the Jordan form:
the series still converges to the principal root provided the spectrum of $X$ lies inside the unit disk.
(3) In Lorentzian signature, $T^\mu{}_\nu$ need not be $g$-symmetric, but as an endomorphism it admits a Jordan decomposition; the spectral bounds above remain valid.
\paragraph*{Remark on the Schur method.}
The existence and construction of the principal matrix root 
$\sqrt{\alpha I+\beta T}$ follow from the Schur decomposition theorem:
any complex matrix admits a unitary decomposition 
$\alpha I+\beta T = Q\,U\,Q^\dagger$, 
with $U$ upper triangular and the eigenvalues of $\alpha I+\beta T$ on its diagonal (see e.g.~\cite{HighamFunctions,GolubVanLoan}).
The square root is then defined as
\begin{equation}
\sqrt{\alpha I+\beta T} = Q\,f(U)\,Q^\dagger,
\qquad f(U)^2 = U,
\end{equation}
where $f(U)$ is obtained recursively from the diagonal elements 
$f(\lambda_i)=\sqrt{\lambda_i}$ on the principal branch.
This procedure (Parlett recurrence) provides a constructive definition 
of the analytic root even when $T$ is not diagonalizable, 
and is numerically stable for any spectrum avoiding the negative real axis.
In this sense, the matrix solution reported in Ref.~\cite{PerezTeruel2013}
is well-defined for arbitrary matter sources satisfying the spectral condition
$\rho(X)<1$ of Theorem \ref{thm:PalatiniConvergence}.
\subsection{Algebraic expansion in the stress tensor}

From \eqref{eq:Psolution}--\eqref{eq:SigmaDef} we expand the matrix square root:
\begin{equation}
\sqrt{\alpha \I+\beta \hat T}=\sqrt{\alpha}\left(\I+\frac{\beta}{2\alpha}\hat T-\frac{\beta^2}{8\alpha^2}\hat T^2+\mathcal O(\hat T^3)\right).
\end{equation}
Thus
\begin{equation}
\Sigma = A_0\Big(\I + r\,\hat T + s\,\hat T^2+\mathcal O(\hat T^3)\Big),\qquad
A_0=\frac{f_R}{2}+\sqrt{\alpha},\quad r=\frac{\beta}{2A_0\sqrt{\alpha}},\quad s=-\frac{\beta^2}{8A_0\alpha^{3/2}}.
\label{eq:A0rs}
\end{equation}
Using $\det(\I+Y)=\exp[\Tr\ln(\I+Y)]$ and $(\I+Y)^{-1}=\I-Y+Y^2+\cdots$ with $Y=r\hat T+s\hat T^2$, we obtain (up to $\mathcal O(T^2)$)
\begin{align}
\sqrt{\det\Sigma}&=A_0^2\!\left[1+\frac{r}{2}\Tr T+\Big(\frac{s}{2}-\frac{r^2}{4}\Big)\Tr(T^2)+\frac{r^2}{8}(\Tr T)^2\right],
\label{eq:detSigma}\\
\Sigma^{-1}&=\frac{1}{A_0}\Big[\I - r\,\hat T + (r^2-s)\,\hat T^2\Big].
\label{eq:SigmaInv}
\end{align}
Since $h=g\,\Sigma/\sqrt{\det\Sigma}$ (equivalent to \eqref{eq:h-relations}), lowering indices with $g$:
\begin{align}
h_{\mu\nu}
&=A_0\Big\{g_{\mu\nu}
+ r\big[\tfrac12(\Tr T)\,g_{\mu\nu}-T_{\mu\nu}\big]
+ \mathcal H^{(2)}_{\mu\nu}\Big\}+\mathcal O(T^3),
\label{eq:h-expansion}\\
\mathcal H^{(2)}_{\mu\nu}
&=(r^2-s)(T^2)_{\mu\nu}
-\tfrac12 r^2(\Tr T)\,T_{\mu\nu}
+\Big[\tfrac12 s\,\Tr(T^2)+\tfrac18 r^2\big((\Tr T)^2-2\,\Tr(T^2)\big)\Big] g_{\mu\nu}.
\end{align}

\section{Dual EFT description}
Inserting \eqref{eq:detSigma}--\eqref{eq:h-expansion} into \eqref{eq:h-relations} and expressing the equations in the physical metric yields an \emph{effective Einstein's field equations}
\begin{equation}
G_{\mu\nu}(g)=\kappa^2\Big[T_{\mu\nu}+(\Delta T)_{\mu\nu}\Big]+\mathcal O(T^3),
\label{eq:EinsteinEff}
\end{equation}
with a local, algebraic correction
\begin{align}
(\Delta T)_{\mu\nu} &=
c_1\,(\Tr T)\,g_{\mu\nu}
+ c_2\,T_{\mu\nu}
+ c_3\,(T^2)_{\mu\nu}
+ c_4\,(\Tr T)\,T_{\mu\nu}
+ c_5\,\Tr(T^2)\,g_{\mu\nu}
+ c_6\,(\Tr T)^2 g_{\mu\nu},
\label{eq:DeltaTbasis}
\end{align}
whose coefficients are fixed functions of $(A_0,r,s)$ (and of $f$ at low order). At leading orders one finds schematically
\begin{equation}
c_1=\frac{r}{2}+\mathcal O(r^2,s),\qquad
c_2=-\,r+\mathcal O(r^2,s),\qquad
c_{3,4,5,6}=\mathcal O(r^2,s).
\end{equation}
Equation \eqref{eq:EinsteinEff} is equivalent to the Euler--Lagrange equations of the local effective action
\begin{equation}
S_{\rm eff}[g,\psi]=\int d^4x\,\sqrt{-g}\left[
\frac{1}{2\kappa^2}R(g)-\Lambda_{\rm eff}
+\mathcal L_m(g,\psi)
+\frac{a_1}{\rho_\star}\,T
+\frac{a_2}{\rho_\star^{2}}\,T^2
+\frac{a_3}{\rho_\star^{2}}\,T_{\mu\nu}T^{\mu\nu}
+\ldots\right],
\label{eq:Seff}
\end{equation}
with $\{a_i\}$ linearly related to $\{c_i\}$ and a characteristic density scale $\rho_\star$ set by the underlying theory (e.g. $\rho_\star=\rho_p\equiv R_p/\kappa^2$ in quadratic models). This provides a \emph{dual EFT in the stress--energy sector}, complementary to curvature--based EFTs. The structure of the expansion and the way Palatini and EiBI models populate the $T$--tower are illustrated schematically in Fig.~\ref{fig:eft-structure}.

\paragraph*{Notation.}
In the effective action, the quadratic invariants built from the 
stress--energy tensor are understood as
\[
T^2 \equiv (g^{\mu\nu}T_{\mu\nu})^2,
\qquad
T_{\mu\nu}T^{\mu\nu} \equiv T^\mu{}_\nu T^\nu{}_\mu = \Tr(T^2),
\]
which are algebraically independent and correspond to the 
two possible scalar contractions of $T_{\mu\nu}$ at second order.
Their role mirrors the pair $(R^2,\,R_{\mu\nu}R^{\mu\nu})$ in 
quadratic gravity.

\subsection{Illustrative model $f(R,Q)=R+(R^2+Q)/R_p$}
For $f_R=1+2R/R_p$, $f_Q=1/R_p$ and $R=-\kappa^2 T$, one finds to leading order
\begin{equation}
A_0\simeq 1-\frac{3\kappa^2 T}{R_p},\qquad
r\simeq \frac{2\kappa^2}{R_p}=\frac{2}{\rho_p},\qquad
s\simeq -\frac{4\kappa^4}{R_p^2}=-\frac{4}{\rho_p^2},\qquad
\rho_p\equiv \frac{R_p}{\kappa^2}.
\end{equation}
Hence $c_{1,2}=\mathcal O(1/\rho_p)$ and $c_{3\text{--}6}=\mathcal O(1/\rho_p^2)$, so deviations from GR are organized by $\epsilon=\rho/\rho_p$ and the series converges rapidly for $\rho\ll\rho_p$.

\section{Extension to Eddington--inspired Born--Infeld (EiBI)}
The EiBI theory can be written as a determinantal action whose variation yields the algebraic relation
\begin{equation}
\sqrt{-q}\, q^{\mu\nu} = \sqrt{-g}\,\Big(\lambda\, g^{\mu\nu} - \epsilon\,\kappa^2\, T^{\mu\nu}\Big),
\label{eq:EiBI_master}
\end{equation}
with constants $\lambda$ (dimensionless) and $\epsilon$ (length$^2$). Defining $q_{\mu\nu}=g_{\mu\alpha}\Sigma^\alpha{}_\nu$
(one has $q=g\,\det\Sigma$ and $q^{\mu\nu}=(\Sigma^{-1})^\mu{}_\alpha g^{\alpha\nu}$), Eq.~\eqref{eq:EiBI_master} implies
\begin{equation}
\sqrt{\det\Sigma}\;\Sigma^{-1}= \lambda\,\I - \epsilon\,\kappa^2\,\hat T \;\equiv\; A,
\qquad\Rightarrow\qquad
\boxed{\;\Sigma=(\det A)^{1/2}\,A^{-1}\;},\qquad
\sqrt{-q}=\sqrt{-g}\,\sqrt{\det\Sigma}=\sqrt{-g}\,\sqrt{\det A}.
\label{eq:Sigma-EiBI-exact}
\end{equation}

It is convenient to set
\begin{equation}
\alpha_{\rm BI}\equiv \frac{\epsilon\kappa^2}{\lambda},\qquad
A=\lambda\big(\I-\alpha_{\rm BI}\,\hat T\big).
\end{equation}
Then
\begin{equation}
A^{-1}=\frac{1}{\lambda}\sum_{n=0}^{\infty}\alpha_{\rm BI}^{\,n}\hat T^{\,n},
\qquad
(\det A)^{1/2}=\lambda^{2}\exp\!\left[\tfrac12\,\Tr\ln\!\big(\I-\alpha_{\rm BI}\hat T\big)\right].
\label{eq:EiBI-Neumann+det}
\end{equation}
Combining \eqref{eq:EiBI-Neumann+det} gives the exact analytic form
\begin{equation}
\Sigma
=\lambda\,\exp\!\left[\tfrac12\,\Tr\ln\!\big(\I-\alpha_{\rm BI}\hat T\big)\right]
\sum_{n=0}^{\infty}\alpha_{\rm BI}^{\,n}\hat T^{\,n}.
\label{eq:Sigma-EiBI-analytic}
\end{equation}

\paragraph*{Low-order expansion.}
Expanding \eqref{eq:Sigma-EiBI-analytic} up to ${\cal O}(T^2)$ one finds
\begin{align}
\Sigma &= \lambda\Bigg[
\I
+ \alpha_{\rm BI}\,\hat T
- \frac{\alpha_{\rm BI}}{2}\,\Tr(\hat T)\,\I
+ \alpha_{\rm BI}^2\Big(
\hat T^{2}
-\frac{1}{2}\Tr(\hat T)\,\hat T
-\frac{1}{4}\Tr(\hat T^{2})\,\I
+\frac{1}{8}\Tr(\hat T)^{2}\,\I
\Big)
+{\cal O}(T^3)\Bigg], \label{eq:Sigma-exp}\\[2mm]
\sqrt{\det\Sigma}&=\sqrt{\det A}
=\lambda^{2}\Bigg[1
-\frac{\alpha_{\rm BI}}{2}\Tr(\hat T)
-\frac{\alpha_{\rm BI}^{2}}{4}\Tr(\hat T^{2})
+\frac{\alpha_{\rm BI}^{2}}{8}\Tr(\hat T)^{2}
+{\cal O}(T^3)\Bigg].
\label{eq:detSigma-exp}
\end{align}
In particular, the linear response is controlled by $+\alpha_{\rm BI}$, while the first trace counterterm comes with $-\alpha_{\rm BI}/2$.
\paragraph*{Dual EFT interpretation.}
Equations \eqref{eq:Sigma-exp}–\eqref{eq:detSigma-exp} reproduce the universal
$h$–expansion in the operator basis $\{ \hat I,\, \hat T,\, \hat T^2,\, \mathrm{Tr}(\widehat T)\hat I,\ldots\}$.
Therefore,
\begin{equation}
G_{\mu\nu}(g)=\kappa^2\Big[T_{\mu\nu}+(\Delta T)^{\rm (BI)}_{\mu\nu}\Big]+{\cal O}(T^3),
\end{equation}
with $(\Delta T)^{\rm (BI)}_{\mu\nu}$ expanded on the same basis as in \eqref{eq:DeltaTbasis}. The leading scalings are
\begin{equation}
c^{\rm (BI)}_{1}\sim +\alpha_{\rm BI},\qquad
c^{\rm (BI)}_{\rm tr}\sim -\frac{\alpha_{\rm BI}}{2},\qquad
c^{\rm (BI)}_{2,3,4,5}\sim {\cal O}(\alpha_{\rm BI}^{2}),
\end{equation}
where $c^{\rm (BI)}_{1}$ multiplies $T_{\mu\nu}$, $c^{\rm (BI)}_{\rm tr}$ multiplies $T\,g_{\mu\nu}$,
and the remaining coefficients are quadratic in $\alpha_{\rm BI}$. The associated dual action keeps the form
\begin{equation}
S_{\rm eff}^{\rm (BI)}[g,\psi]=\int d^4x\,\sqrt{-g}\left[
\frac{R}{2\kappa^2}-\Lambda_{\rm eff}
+\mathcal L_m
+\frac{b_1}{\rho_{\rm BI}}\,T
+\frac{b_2}{\rho_{\rm BI}^{2}}\,T^2
+\frac{b_3}{\rho_{\rm BI}^{2}}\,T_{\mu\nu}T^{\mu\nu}
+\ldots\right],\qquad
\rho_{\rm BI}\equiv \frac{1}{\epsilon\kappa^2},
\end{equation}
with updated numerical coefficients $b_i=b_i(\alpha_{\rm BI})$ obtained from \eqref{eq:Sigma-exp}–\eqref{eq:detSigma-exp}. EiBI thus realizes a determinantal resummation of the stress--energy tower depicted in Fig.~\ref{fig:eft-structure}.
\begin{figure}[t]
\centering
\begin{tikzpicture}[
  >=Latex,
  axis/.style={-Latex, thick},
  tick/.style={thin},
  box/.style={draw, rounded corners, thick, align=center, font=\footnotesize, fill=#1!15,
              minimum width=3.6cm, minimum height=8mm},
  ybox/.style={draw, rounded corners, thick, align=center, font=\small, fill=yellow!15,
               minimum width=3.6cm, minimum height=10mm},
  label/.style={font=\footnotesize}
]

\draw[axis] (0,0) -- (12.2,0) node[below right,label] {order in $T_{\mu\nu}$};
\draw[axis] (0,0) -- (0,5.5) node[above left,label] {physical content};

\foreach \x/\txt in {1/1st,3/2nd,5/3rd,7/4th,9/5th,11/$\cdots$}
  \draw[tick] (\x,0) -- (\x,-0.12) node[below] {\txt};

\foreach \y/\txt in {0.9/{linear resp.},2.1/{quadratic self-int.},
                     3.6/{cubic \& higher},5.0/{resummation}}
  \draw[tick] (0,\y) -- (-0.12,\y) node[left] {\txt};

\foreach \y in {0.9,2.1,3.6,5.0}
  \draw[densely dashed, gray!70, line cap=round] (0.2,\y) -- (11.8,\y);

\node[box=blue]   at (1,0.9)  {$a_1\,T$\\\textit{linear, fluids}};
\node[box=green]  at (3,2.1)  {$a_2\,T^2$\\\textit{trace-free sector (EM)}};

\node[ybox] at (4.6,3.6)  {$a_3\,T^{\!3}$};
\node[ybox] at (7.2,3.6)  {$a_4\,T^{\!4}$};
\node[ybox] at (9.8,3.6)  {$a_5\,T^{\!5}$};

\node[box=orange, minimum width=4.1cm] at (11,5.0)
  {\textbf{Born--Infeld}\\\textit{all orders / resummed}};

\draw[blue!70,->,line width=0.9pt,fill=none]
  (0.9,0.95) .. controls (1.8,1.35) .. (2.9,1.95);
\node[blue!70,fill=white,inner sep=1pt,outer sep=0pt,font=\footnotesize]
  at (2.05,1.55) {Palatini $f(R,Q)$ up to $T^2$};

\draw[thick,orange!80,->,fill=none] (4.0,4.05) -- (11.25,5.05)
  node[midway,below,sloped,fill=white,inner sep=1pt]
  {\footnotesize EiBI $\sim$ resummation in $T$};

\node[label,align=left] at (6.0,0.35)
 {dual EFT coefficients $a_n = a_n(f_R,f_Q;\,\alpha,\beta)$};
\draw[->,thin] (6.0,0.55) -- (6.0,0.85);

\node[label,align=left] at (3.0,5.25)
 {$\rho(T)/\rho_\star \ll 1$ \ $\Rightarrow$ \ convergence of series};
\draw[->,thin] (3.0,5.05) -- (3.2,4.62);

\end{tikzpicture}
\caption{
\textbf{Structure of the dual EFT expansion.}
The analytic map $\Sigma(T)$ induces a local effective action 
$S_{\rm eff}[g,T]=S_{\rm GR}+\sum_{n\ge1} a_n\,T^n/\rho_\star^{\,n-1}$.
For perfect-fluid sources, the linear term ($\propto T$) governs leading departures from GR, 
while trace-free sectors (e.g.\ electromagnetism) start at quadratic order ($\propto T^2$).
Quadratic Palatini models $f(R,Q)$ effectively populate the first two orders, 
whereas Eddington-inspired Born--Infeld corresponds to a resummation to all orders in $T$.
Convergence holds in the controlled regime $\rho(T)/\rho_\star\ll1$ (see App.~\ref{app:rigorous-convergence}).
}
\label{fig:eft-structure}
\end{figure}

\paragraph*{Convergence of the series.}
The analytic form~\eqref{eq:Sigma-EiBI-analytic} involves two factors:
an exponential of a trace logarithm, $\exp[\frac12\Tr\ln(\I-\alpha_{\rm BI}\hat T)]$, and the
Neumann series $(\I-\alpha_{\rm BI}\hat T)^{-1}=\sum_{n=0}^{\infty}(\alpha_{\rm BI}\hat T)^{n}$.
The scalar prefactor is analytic as long as $(\I-\alpha_{\rm BI}\hat T)$ is invertible, while
the Neumann series converges if and only if the spectral radius of $\alpha_{\rm BI}\hat T$ satisfies
\begin{equation}
\rho(\alpha_{\rm BI}\hat T)<1.
\end{equation}
For a perfect fluid, $\hat T^\mu{}_\nu={\rm diag}[-\rho,p,p,p]$, this condition reduces to
$|\alpha_{\rm BI}\rho|<1$ and $|\alpha_{\rm BI}p|<1$.
Defining the Born--Infeld density scale $\rho_{\rm BI}\equiv(\epsilon\kappa^2)^{-1}$,
the convergence domain is therefore
\begin{equation}
\boxed{\;\rho,\,|p| \ll \rho_{\rm BI}\;}
\end{equation}
which coincides with the binomial convergence criterion discussed in Sec.~\ref{sec:exact-series}.
Thus, the determinantal (EiBI) resummation preserves the same physical regime of validity
as the Palatini expansion in powers of $T_{\mu\nu}$, ensuring that both formulations are
consistent within the low-density effective field theory domain.
For arbitrary matter sources, the Neumann expansion
$(\I-\alpha_{\rm BI}\hat T)^{-1}=\sum_{n=0}^{\infty}(\alpha_{\rm BI}\hat T)^{n}$
converges if and only if the spectral radius $\rho(\alpha_{\rm BI}\hat T)<1$.
This holds independently of the diagonalizability of $\hat T$.
In the perfect--fluid case, this reduces to
$|\alpha_{\rm BI}\rho|,|\alpha_{\rm BI}p|<1$;
for electromagnetic fields to $|\alpha_{\rm BI}\mathcal E|<1$;
and for null or type--N matter ($\hat T^2=0$) the series truncates exactly.
Hence, the Born--Infeld resummation remains valid for all matter types
within the same low--energy domain
\(\|\,\alpha_{\rm BI}\hat T\,\|<1\),
which corresponds physically to densities and field strengths
well below the Born--Infeld scale $\rho_{\rm BI}=1/(\epsilon\kappa^2)$.

\section{Some examples}
To illustrate how the dual EFT map operates in concrete situations, we now apply the 
formalism to representative matter sources. The goal is not to develop full 
phenomenology, but to make explicit how the analytic structures derived in the previous 
sections determine the auxiliary metric $h_{\mu\nu}$ and the effective observables 
constructed from it. The general flow from an input stress tensor $T^\mu{}_\nu$ to the 
corresponding effective quantities is summarized schematically in Fig.~3.
\subsection{Perfect fluid (cosmology \& stars)}
For a perfect fluid $T^\mu{}_\nu=\mathrm{diag}(-\rho,p,p,p)$ one has
$\Tr T=-\rho+3p$, $\Tr(T^2)=\rho^2+3p^2$ and $(T^2)^\mu{}_\nu=\mathrm{diag}(\rho^2,p^2,p^2,p^2)$.
Using Eq.~\eqref{eq:h-expansion}, up to ${\cal O}(T)$,
\begin{align}
h_{tt}&=A_0\,g_{tt}\Big[1+\tfrac{r}{2}(\rho+3p)\Big],\qquad
h_{rr}=A_0\,g_{rr}\Big[1+\tfrac{r}{2}(-\rho+p)\Big],\nonumber\\
h_{\theta\theta}&=A_0\,g_{\theta\theta}\Big[1+\tfrac{r}{2}(-\rho+p)\Big],\qquad
h_{\phi\phi}=\sin^2\theta\,h_{\theta\theta}.
\end{align}
The Einstein--like equations in $h_{\mu\nu}$ map to an effective Einstein theory in $g_{\mu\nu}$ with
\begin{equation}
\rho_{\rm eff}=\frac{1}{\sqrt{\det\Sigma}}\!\left(\rho+\frac{f}{2\kappa^2}\right),
\qquad
p_{\rm eff}=\frac{1}{\sqrt{\det\Sigma}}\!\left(p-\frac{f}{2\kappa^2}\right),
\label{eq:rhopeff}
\end{equation}
and, to first order in $T$,
\begin{equation}
\rho_{\rm eff}\approx \rho\Big[1-\tfrac12 r(-\rho+3p)\Big]+\frac{f}{2\kappa^2A_0},\qquad
p_{\rm eff}\approx p\Big[1-\tfrac12 r(-\rho+3p)\Big]-\frac{f}{2\kappa^2A_0}.
\label{eq:rhopeff-LO}
\end{equation}
These expressions represent the first terms in the universal dual EFT expansion and 
apply independently of the underlying gravitational Lagrangian, provided the spectral 
condition $\rho(X)<1$ holds.
\paragraph*{Cosmology (FLRW).}
For $g_{\mu\nu}$ FLRW and barotropic $p=w\rho$, Eq.~\eqref{eq:rhopeff} yields a modified Friedmann equation
\begin{equation}
3H_h^2=\kappa^2\,\rho_{\rm eff}+\Lambda_{\rm eff},\qquad
\Lambda_{\rm eff}\equiv \frac{f}{2A_0}\,,
\end{equation}
with $H_h$ the Hubble rate of $h_{\mu\nu}$; using the algebraic map $h\leftrightarrow g$ one obtains $H$ in the physical frame. At leading order the departure from GR is controlled by $r\propto 1/\rho_\star$ and is $\mathcal O(\rho/\rho_\star)$. A schematic comparison between the GR, Palatini-EFT and Born--Infeld–resummed behaviours of $H^2(\rho)$ is shown in Fig.~\ref{fig:cosmo-schematic}.
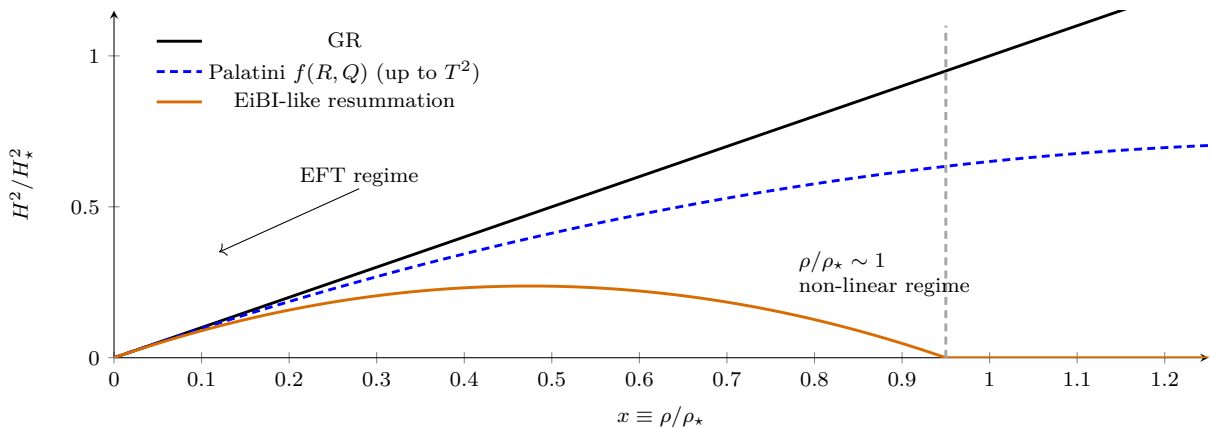
\begin{figure}[t]
\centering
\begin{tikzpicture}
\begin{axis}[
    width=0.9\linewidth, height=6.2cm,
    xmin=0, xmax=1.25,
    ymin=0, ymax=1.15,
    axis lines=left,
    xlabel={$x \equiv \rho/\rho_\star$},
    ylabel={$H^2/H_\star^2$},
    label style={font=\footnotesize},
    ticklabel style={font=\footnotesize},
    legend style={font=\footnotesize, draw=none, fill=none, at={(0.03,0.97)}, anchor=north west},
    every axis plot/.style={line width=1.1pt},
    domain=0:1.25,
    samples=400,
    clip=true,
    grid=none
]

\def\aTwo{-0.35}   
\def\xb{0.95}      

\addplot[black] {x};
\addlegendentry{GR}

\addplot[blue, densely dashed, domain=0:1.25] 
  ({x}, {max(0, x + \aTwo*x^2)});
\addlegendentry{Palatini $f(R,Q)$ (up to $T^2$)}

\addplot[orange!85!black] 
  ({x}, {max(0, x*(1 - x/\xb))});
\addlegendentry{EiBI-like resummation}

\addplot[densely dashed, gray!70] coordinates {(\xb,0) (\xb,1.1)};
\node[gray!70, anchor=north east, font=\footnotesize] at (axis cs:\xb,0) {$x_b$};

\node[font=\footnotesize] at (axis cs:0.28,0.60) {EFT regime};
\draw[->,thin] (axis cs:0.28,0.56) -- (axis cs:0.12,0.35);

\node[font=\footnotesize, align=left] at (axis cs:0.88,0.28)
{\(\rho/\rho_\star\sim 1\) \\ non-linear regime};

\end{axis}
\end{tikzpicture}
\caption{
\textbf{Cosmological application (schematic).}
Effective Hubble rate versus normalized density \(x=\rho/\rho_\star\).
General Relativity gives \(H^2/H_\star^2=x\) (black).
The dual EFT truncation up to \(T^2\) (blue, dashed) is shown with a representative coefficient \(a_2\).
A Born--Infeld--like resummation (orange) captures non-linear saturation and a bounce at \(x=x_b\).
Parameters \((a_2,x_b)\) are illustrative for visualization; the formalism fixes their mapping to \((f_R,f_Q;\alpha,\beta)\).
}
\label{fig:cosmo-schematic}
\end{figure}

\paragraph*{Stars (static, spherically symmetric).}
In the stellar case, \eqref{eq:rhopeff-LO} feeds the TOV structure in the $h$-frame; the mapping $r_h^2=h_{\theta\theta}$ gives the small geometric rescalings needed to obtain $M(R)$ shifts. The net effect at first order is equivalent to a mild renormalization of $(\rho,p)$ plus tiny anisotropy-like geometric factors from $h_{rr},h_{\theta\theta}$.

\paragraph*{Advantages and physical insight.}
The cosmological sector illustrates particularly well the advantages of the dual EFT construction. 
In standard treatments of Palatini $f(R,Q)$ or EiBI models, the modified Friedmann equations are obtained 
by explicitly solving the algebraic relation between $q_{\mu\nu}$ and $g_{\mu\nu}$ for a perfect-fluid source, 
which typically requires a case-by-case numerical inversion. 
In contrast, the present formulation provides an \emph{analytic and universal} expansion 
for the effective energy density and pressure, Eqs.~\eqref{eq:rhopeff}--\eqref{eq:rhopeff-LO}, 
valid for any barotropic equation of state and for the entire $f(R,Q)$ or Born--Infeld class. 

From a physical standpoint, this expansion has three immediate benefits:
\begin{enumerate}
\item[(i)] It organizes deviations from GR by a small, dimensionless parameter 
$\epsilon \equiv \rho/\rho_\star$, making the regime of validity explicit and allowing 
a direct identification of the leading corrections as analytic, local functions of the matter variables.
\item[(ii)] It allows one to read off the modified Friedmann dynamics 
without integrating differential equations: the expansion translates geometric nonlinearities 
into an effective ``stiffness'' of the fluid, $\rho_{\rm eff}(\rho,p)$, 
that can be implemented directly in cosmological codes or phenomenological models.
\item[(iii)] The same algebraic structure applies to any matter source. 
In the case of trace-free sources (e.g.\ radiation or electromagnetic fields), the leading corrections vanish 
at $\mathcal{O}(T)$ and only appear at quadratic order, revealing a built-in suppression mechanism 
for relativistic components that is not manifest in the usual formulations.
\end{enumerate}
As a consequence, the framework provides a transparent EFT-like hierarchy for the matter sector, 
bridging high-density cosmology and the physics of compact objects within a single algebraic expansion. 
In particular, the bounce or avoidance of singularities discussed in earlier Palatini and EiBI models 
appears here as a controlled resummation of the series in $\rho/\rho_\star$, 
which can be truncated or extended according to the physical density range of interest. 
This qualitative behaviour is captured in the schematic plot of Fig.~\ref{fig:cosmo-schematic}.

\subsection{Electromagnetic field (trace-free source)}
For Maxwell electrodynamics in four dimensions, $T^\mu{}_\mu=0$. Then
the ${\cal O}(T)$ deformation of $h_{\mu\nu}$ simplifies to
\begin{equation}
h_{\mu\nu}=A_0\Big[g_{\mu\nu}-r\,T_{\mu\nu}\Big]+\mathcal O(T^2),
\qquad
\sqrt{\det\Sigma}=A_0^2\Big[1-\tfrac18 r^2\,\Tr(T^2)\Big]+\mathcal O(T^3),
\end{equation}
so the leading correction is purely proportional to $T_{\mu\nu}$ and the first trace contribution enters only at ${\cal O}(T^2)$ through $\Tr(T^2)$. This yields a particularly clean laboratory for lensing or black-hole exteriors with electromagnetic hair, since the power counting depends on a single small parameter $r$ (or $r_{\rm BI}$ in EiBI).

\subsection{Compact stars: TOV at first order}
In the $h$-frame the equilibrium equations retain the standard form
\begin{equation}
\frac{dm_h}{dr_h}=4\pi r_h^2\,\rho_{\rm eff},\qquad
\frac{dp_{\rm eff}}{dr_h}
=-\frac{(\rho_{\rm eff}+p_{\rm eff})\,(m_h+4\pi r_h^3 p_{\rm eff})}{r_h(r_h-2G m_h)}\,,
\end{equation}
with $(\rho_{\rm eff},p_{\rm eff})$ from \eqref{eq:rhopeff}. Using \eqref{eq:rhopeff-LO} one immediately gets the leading shifts
\begin{equation}
\delta\rho \equiv \rho_{\rm eff}-\rho \simeq -\frac{r}{2}\,(-\rho+3p)\,\rho+\frac{f}{2\kappa^2A_0},\quad
\delta p \equiv p_{\rm eff}-p \simeq -\frac{r}{2}\,(-\rho+3p)\,p-\frac{f}{2\kappa^2A_0},
\end{equation}
which translate into controlled deviations in $M(R)$ and tidal deformabilities at ${\cal O}(\rho/\rho_\star)$ (or ${\cal O}(\epsilon\kappa^2\rho)$ in EiBI). This ``plug-and-play'' recipe makes the phenomenology straightforward once an EoS is specified.
\begin{figure}[t]
\centering
\begin{tikzpicture}[
  >=Latex,
  node distance=7mm and 10mm,
  box/.style={
    draw,
    rounded corners,
    thick,
    minimum height=8mm,
    text width=4.2cm,
    align=center,
    font=\footnotesize,
    fill=#1!15,
    preaction={fill=black,opacity=0.05,transform canvas={xshift=1pt,yshift=-1pt}}
  },
  arrow/.style={-{Latex[length=2.5mm]}, thick}
  ]

\node[box=blue]   (T)     {$T_{\mu\nu}$\\[1pt]\textit{(matter source)}};
\node[box=green,  right=of T] (Sigma)  {$\displaystyle \Sigma=\frac{f_R}{2}I+\sqrt{\alpha I+\beta T}$\\[1pt]\textit{(matrix map)}}; 
\node[box=yellow, right=of Sigma] (h)  {$\displaystyle h_{\mu\nu}=\frac{g_{\mu\alpha}\Sigma^\alpha{}_\nu}{\sqrt{\det\Sigma}}$\\[1pt]\textit{(auxiliary metric)}};

\node[box=orange, below=of Sigma] (eff) {$\displaystyle S_{\rm eff}[g,T]=S_{\rm GR}+\sum a_n\,\frac{T^n}{\rho_\star^{\,n-1}}$\\[1pt]\textit{(dual EFT action)}};
\node[box=red,    right=of eff]   (obs) {\textit{Observables:}\\[1pt]Cosmology / Stars / EM};

\draw[arrow] (T) -- (Sigma);
\draw[arrow] (Sigma) -- (h);
\draw[arrow] (Sigma) -- (eff);
\draw[arrow] (h) -- (eff);
\draw[arrow] (eff) -- (obs);

\end{tikzpicture}

\caption{
Schematic flow of the dual EFT construction.
Starting from a matter source $T_{\mu\nu}$, the algebraic matrix map $\Sigma$ encodes the nonlinear gravitational response of the connection.
From $\Sigma$ one obtains the auxiliary metric $h_{\mu\nu}$ satisfying Einstein--like equations.
The relation between $h_{\mu\nu}$ and the physical metric $g_{\mu\nu}$ defines an effective local action $S_{\rm eff}[g,T]$
expanded in powers of $T_{\mu\nu}/\rho_\star$.
This unified algebraic framework applies to cosmological fluids, stellar interiors, and electromagnetic fields.
}
\label{fig:flowEFT}
\end{figure}
\section{Symmetric teleparallel $f(Q)$ gravity in the dual analytic EFT framework}
\label{sec:dual-fQ-final}

Symmetric teleparallel gravity is defined by imposing vanishing curvature and torsion
on the affine connection while allowing for non--metricity.  The independent variables
are $(g_{\mu\nu},\Gamma^\alpha{}_{\mu\nu})$ subject to
\begin{equation}
R^\alpha{}_{\beta\mu\nu}(\Gamma)=0, 
\qquad
T^\alpha{}_{\mu\nu}(\Gamma)=0, 
\qquad
Q_{\alpha\mu\nu}\equiv\nabla_\alpha g_{\mu\nu}\neq0 .
\end{equation}
The non--metricity scalar in the STEGR convention is
\begin{equation}
Q = - g^{\mu\nu}\!\left(L^\alpha{}_{\mu\beta} L^\beta{}_{\nu\alpha}
      - L^\alpha{}_{\mu\nu} L^\beta{}_{\alpha\beta}\right),
\qquad
L^\alpha{}_{\mu\nu}
=\tfrac12 g^{\alpha\lambda}
\!\left(\nabla_\mu g_{\lambda\nu}+\nabla_\nu g_{\lambda\mu}
      -\nabla_\lambda g_{\mu\nu}\right),
\label{eq:Q-def}
\end{equation}
and the action
\begin{equation}
S[g,\Gamma,\psi]=\int d^4x\,\sqrt{-g}\,f(Q)+S_m[g,\psi]
\label{eq:fQ-action}
\end{equation}
yields second--order field equations for the metric.  
General Relativity is recovered for $f(Q)=Q$, while nonlinear choices describe
viable modified gravities in cosmology and astrophysics.

\subsection{Motivation from the dual matter expansion}

In the Ricci--based metric--affine and Eddington--inspired Born--Infeld sectors
discussed above, the independent connection can be eliminated algebraically, leading
to Einstein equations of the form
\[
G_{\mu\nu}(g)
=\kappa^2\Big[T_{\mu\nu}+(\Delta T)_{\mu\nu}\Big],
\]
where $(\Delta T)_{\mu\nu}$ admits an analytic expansion in invariant powers of 
$T^\mu{}_\nu$.  
EiBI gravity realises this structure in a determinantal expression for the 
constitutive matrix, while in Palatini $f(R,Q)$ models the relation is different but 
can be organised as a convergent series in $T$; in both cases the dual EFT provides a 
transparent description of the matter self--interactions induced by the connection.

Symmetric teleparallel gravity belongs to a distinct geometric class, since the
non--metricity tensor contains derivatives of the metric.  
However, in the \emph{coincident gauge} ($\Gamma^\alpha{}_{\mu\nu}=0$) all covariant
derivatives reduce to partial derivatives and the non--metricity is algebraic in
$Q_{\mu\nu}\equiv Q_{\mu\alpha\nu}{}^\alpha$.  
This allows one to construct a perturbative analogue of the dual EFT also for 
$f(Q)$ models, at least in regimes sufficiently close to the STEGR limit and for
backgrounds with high symmetry.

\subsection{Constitutive response near STEGR}

Varying \eqref{eq:fQ-action} with respect to the metric yields
\begin{equation}
\frac{2}{\sqrt{-g}}\partial_\alpha\!\big(\sqrt{-g}f_Q P^\alpha{}_{\mu\nu}\big)
+\tfrac12 f\,g_{\mu\nu}
+ f_Q\!\left(P_{\mu\alpha\beta}Q_\nu{}^{\alpha\beta}
 -2Q_{\alpha\beta\mu}P_\nu{}^{\alpha\beta}\right)
= -\kappa^2 T_{\mu\nu},
\label{eq:fQ-eom}
\end{equation}
with $f_Q=df/dQ$ and $P^\alpha{}_{\mu\nu}=\partial Q/\partial Q_\alpha{}^{\mu\nu}$.
For an analytic model
\begin{equation}
f(Q)=c_1 Q + c_2 Q^2 + c_3 Q^3+\cdots,
\end{equation}
the STEGR point corresponds to $c_1\neq0$, and the equations become linear in 
$Q_{\mu\nu}$ when expanded around $Q=0$.

At leading order, and assuming a regular linear--response regime around the STEGR 
background, one can \emph{parametrize} the relation between non--metricity and matter as
\begin{equation}
Q_{\mu\nu}
=\chi_{\mu\nu}{}^{\alpha\beta}T_{\alpha\beta}
+{\cal O}(T^2),
\label{eq:Q-linear-T}
\end{equation}
where the susceptibility tensor $\chi$ is determined by $c_1$ and by the symmetries
of the chosen background.
Equation \eqref{eq:Q-linear-T} plays the role of a \emph{linearised constitutive map} 
for symmetric teleparallel gravity: it expresses the first-order non--metricity 
response directly in terms of the matter stress tensor.

Substituting \eqref{eq:Q-linear-T} back into \eqref{eq:fQ-eom} reorganises the field 
equations as
\begin{equation}
G_{\mu\nu}(g)
=\kappa^2\Big[T_{\mu\nu}+(\Delta T)^{(Q)}_{\mu\nu}\Big],
\qquad
(\Delta T)^{(Q)}_{\mu\nu}
=\sum_{n\ge1} a_n^{(Q)} [T^n]_{\mu\nu},
\end{equation}
which defines the dual EFT for $f(Q)$ in the regime where the expansion in 
$T_{\mu\nu}$ is valid.
No closed-form algebraic relation between $g_{\mu\nu}$ and an auxiliary metric is
known for generic $f(Q)$ models; the dual formulation should therefore be understood 
as an analytic EFT packaging of the order-by-order constitutive response rather than 
as an exact algebraic map.

\subsection{Background matching: $H^2(\rho)$ and the coefficient $a_2(w)$}

For a spatially flat FLRW universe the non--metricity scalar reduces to 
$Q=\mathcal Q_0 H^2$ with $\mathcal Q_0=6$ in the STEGR convention.
The background equation for $f(Q)$ takes the algebraic form
\begin{equation}
\mathcal A(Q)\,H^2 + \mathcal B(Q) = \kappa^2 \rho,
\label{eq:fQ-bg}
\end{equation}
where $\mathcal A$ and $\mathcal B$ depend on $f$ and $f_Q$.
Expanding an analytic $f(Q)=c_1 Q + c_2 Q^2 + \cdots$ around the GR limit yields
\begin{equation}
H^2
= \frac{\kappa^2}{3}\rho
\Big[1+\mathcal C_2(w)\,\frac{c_2}{c_1^2}\,\rho
+{\cal O}(\rho^2)\Big],
\label{eq:H2-fQ-bg}
\end{equation}
where $\mathcal C_2(w)$ is a dimensionless function fixed by the STEGR background and 
the matter equation of state~$p=w\rho$.

On the dual EFT side, the corresponding expansion reads
\begin{equation}
H^2
= \frac{\kappa^2}{3}\rho
\Big[1 + a_2(w)\,\rho + {\cal O}(\rho^2)\Big].
\label{eq:H2-dualEFT-bg}
\end{equation}
Matching \eqref{eq:H2-fQ-bg} and \eqref{eq:H2-dualEFT-bg} gives the dictionary
\begin{equation}
\boxed{
\frac{c_2}{c_1^2}
=\frac{a_2(w)}{\mathcal C_2(w)} }.
\end{equation}

\paragraph*{Example: $f(Q)=Q+\alpha Q^2$.}
For the quadratic model one has $f_Q=1+2\alpha Q$ and $Q=6H^2$.
Solving \eqref{eq:fQ-bg} perturbatively around GR yields
\begin{equation}
H^2 = \frac{\kappa^2}{3}\rho
\Big[1 + \gamma_Q\,\alpha\,\kappa^2\rho 
+ {\cal O}(\alpha^2\kappa^4\rho^2)\Big],
\end{equation}
where $\gamma_Q$ is a numerical constant of order unity depending on the specific 
convention for $Q$ and for the background equation.
Thus
\begin{equation}
a_2(w)=\gamma_Q\,\alpha\,\kappa^2,
\qquad
\rho_\star^{(Q)}\sim \frac{1}{|\alpha|\kappa^2}.
\end{equation}

\paragraph*{Interpretation of $a_2(w)$.}
The leading dual EFT correction for $f(Q)$ is encoded in the coefficient $a_2(w)$,
which controls the effective density scale at which non--metricity modifications 
become relevant:
\begin{equation}
\kappa_{\rm eff}^2(\rho)
=\kappa^2\Big[1+a_2(w)\rho+\cdots\Big].
\end{equation}
Different matter species probe the non--metricity sector differently through the 
equation--of--state dependence of $a_2(w)$.

\section{Outlook: Towards a Unified Algebraic Framework for Non--Riemannian Gravity?}

The results presented in this work reveal a common algebraic mechanism 
underlying Palatini $f(R,Q)$ gravity, Eddington--inspired Born--Infeld (EiBI) 
theories, and symmetric teleparallel $f(Q)$ models.  
In Palatini $f(R,Q)$ and EiBI gravity the independent connection is eliminated
by algebraic field equations, while in symmetric teleparallel $f(Q)$ it can be
fixed by gauge. In all these cases the resulting dynamics can be reorganized
(at least in appropriate regimes) into a local dual EFT expressed in 
terms of invariant powers of the matter stress tensor.  
This property is far from generic: it does not occur in metric $f(R)$ theories, 
in metric--affine models with kinetic terms for the connection, 
nor in general torsion--based or Weyl--type gravities.  
The fact that three geometrically distinct frameworks---Ricci--based 
metric--affine gravity, determinantal Born--Infeld models, and 
non--metricity--based symmetric teleparallel gravity---all admit a dual EFT 
description suggests that they may belong to a broader class of 
``algebraically integrable’’ non--Riemannian theories.

This observation naturally raises the question of whether a more general 
underlying framework exists.  
Could there be a parent action or geometric operator combining curvature, 
non--metricity, and torsion, from which Palatini $f(R,Q)$, 
Born--Infeld gravity, and symmetric teleparallel $f(Q)$ arise as different 
limiting sectors?  
We do not attempt to answer this question here, and at present no such 
construction is known.  
Nevertheless, the dual EFT structure identified in this work suggests that 
investigating this possibility may be fruitful, particularly in regimes where 
the connection enters the action algebraically or quasi--algebraically.

Developing this perspective further---including the classification of 
non--Riemannian theories admitting algebraic connection elimination, the 
associated constitutive maps, and the structure of the resulting dual EFT 
coefficients---is left for future work.

\appendix
\section{Technical formulas up to \texorpdfstring{$\mathcal O(T^2)$}{O(T\^2)}}\label{app:tech}
From Eqs.~\eqref{eq:A0rs}--\eqref{eq:h-expansion}:
\begin{align}
\Sigma &= A_0\Big(\I + r\,\hat T + s\,\hat T^2\Big)+\mathcal O(\hat T^3),\\
\Sigma^{-1}&=\frac{1}{A_0}\Big(\I - r\,\hat T + (r^2-s)\,\hat T^2\Big)+\mathcal O(\hat T^3),\\
\sqrt{\det\Sigma}&=A_0^2\!\left[1+\frac{r}{2}\Tr T+\Big(\frac{s}{2}-\frac{r^2}{4}\Big)\Tr(T^2)+\frac{r^2}{8}(\Tr T)^2\right]+\mathcal O(T^3),\\
h_{\mu\nu}
&=A_0\Big\{g_{\mu\nu}
+ r\big[\tfrac12(\Tr T)\,g_{\mu\nu}-T_{\mu\nu}\big]
+ (r^2-s)(T^2)_{\mu\nu}
-\tfrac12 r^2(\Tr T)\,T_{\mu\nu}\nonumber\\
&\qquad\quad
+\Big[\tfrac12 s\,\Tr(T^2)+\tfrac18 r^2\big((\Tr T)^2-2\,\Tr(T^2)\big)\Big] g_{\mu\nu}\Big\}
+\mathcal O(T^3).
\end{align}
\section{Second--order expansion for perfect fluids (structure)}
\label{app:SecondOrderFluids}

For a perfect fluid source,
\begin{equation}
T^\mu{}_\nu = \mathrm{diag}(-\rho,p,p,p),\qquad
\Tr T = -\rho+3p,\qquad
\Tr(T^2)=\rho^2+3p^2,\qquad
(T^2)^\mu{}_\nu=\mathrm{diag}(\rho^2,p^2,p^2,p^2),
\end{equation}
the auxiliary metric obtained from Eq.~\eqref{eq:h-expansion} reads up to ${\cal O}(T^2)$
\begin{align}
h_{\mu\nu}
&=A_0\Big\{
g_{\mu\nu}
+ r\big[\tfrac12(\Tr T)\,g_{\mu\nu}-T_{\mu\nu}\big]
+ (r^2-s)(T^2)_{\mu\nu}
-\tfrac12 r^2(\Tr T)\,T_{\mu\nu}\nonumber\\
&\qquad
+\Big[\tfrac12 s\,\Tr(T^2)
+\tfrac18 r^2\big((\Tr T)^2-2\,\Tr(T^2)\big)\Big] g_{\mu\nu}
\Big\}+\mathcal O(T^3).
\label{eq:h-second-order}
\end{align}

The determinant factor in Eq.~\eqref{eq:detSigma} gives
\begin{equation}
\sqrt{\det\Sigma}=A_0^2
\Big[1+\tfrac{r}{2}\Tr T
+\Big(\tfrac{s}{2}-\tfrac{r^2}{4}\Big)\Tr(T^2)
+\tfrac{r^2}{8}(\Tr T)^2\Big]+\mathcal O(T^3),
\label{eq:detSigma-second-order}
\end{equation}
so that
\begin{equation}
\frac{1}{\sqrt{\det\Sigma}}=\frac{1}{A_0^2}
\Big[1-\tfrac{r}{2}\Tr T
-\Big(\tfrac{s}{2}-\tfrac{r^2}{4}\Big)\Tr(T^2)
+\tfrac{r^2}{8}(\Tr T)^2\Big]+\mathcal O(T^3).
\label{eq:invDetSigma-second-order}
\end{equation}

These expressions suffice to reconstruct the second--order corrections to $\rho_{\rm eff}$ and $p_{\rm eff}$ for any given equation of state, if desired. Since the main text focuses on the leading-order behaviour, we refrain from writing the lengthy explicit formulas here; they can be obtained by inserting \eqref{eq:h-second-order} and \eqref{eq:invDetSigma-second-order} into the definitions of $\rho_{\rm eff}$ and $p_{\rm eff}$ and performing a straightforward but algebraically involved expansion.

\section{Rigorous proof of the convergence lemma}\label{app:rigorous-convergence}

We provide here a complete proof that the matrix binomial series
\[
\sum_{n=0}^{\infty}\binom{1/2}{n}\,X^n
\]
converges absolutely for $\|X\|<1$ (for any submultiplicative matrix norm) and that its sum equals the \emph{principal} square root $(I+X)^{1/2}$. The argument is extended to the general spectral condition $\rho(X)<1$, together with an explicit bound for the remainder.

\begin{lemma}[Power--series functional calculus]
Let $f(z)=\sum_{n\ge 0} a_n z^n$ be analytic on the open disk 
$D_R=\{z\in\mathbb C:\,|z|<R\}$ with radius of convergence $R>0$. 
If $X$ is a linear operator on a finite--dimensional vector space such that $\|X\|<R$ for some submultiplicative norm, then the series $\sum_{n\ge 0} a_n X^n$ converges absolutely and
\[
f(X)=\sum_{n=0}^\infty a_n X^n.
\]
Moreover, if $\sigma(X)\subset D_R$ (equivalently $\rho(X)<R$), the same identity follows from the holomorphic functional calculus.
\end{lemma}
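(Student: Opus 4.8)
The plan is to split the statement into two parts: absolute convergence of the operator series under $\|X\|<R$, and the identification of its sum with the functional-calculus object $f(X)$; the spectral-radius version will then follow by a change of norm. For the first part I would fix any $r$ with $\|X\|<r<R$. Since $R$ is the radius of convergence of $f$, the scalar series $\sum_{n\ge0}|a_n|\,r^n$ converges, and submultiplicativity gives $\|a_nX^n\|\le|a_n|\,\|X\|^n\le|a_n|\,r^n$. Hence $\sum_n a_nX^n$ converges absolutely in the complete normed algebra of operators on the (finite-dimensional) vector space; call its sum $g(X)$.

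For the identification $g(X)=f(X)$, I would use the Cauchy integral representation from the holomorphic functional calculus. Take a circle $\gamma$ of radius $r$ about the origin with $\rho(X)\le\|X\|<r<R$; then $\gamma$ encloses $\sigma(X)$ and lies inside the domain of analyticity of $f$, so $f(X)=\frac{1}{2\pi i}\oint_\gamma f(z)\,(zI-X)^{-1}\,dz$. On $\gamma$ the resolvent admits the Neumann expansion $(zI-X)^{-1}=\sum_{n\ge0}z^{-n-1}X^n$, which converges uniformly on $\gamma$ because $\|X\|/r<1$. Interchanging sum and integral (justified by this uniform convergence) and using the Cauchy coefficient formula $\frac{1}{2\pi i}\oint_\gamma z^{-n-1}f(z)\,dz=a_n$ gives $f(X)=\sum_{n\ge0}a_nX^n=g(X)$.

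For the spectral-radius version, $\rho(X)<R$, the only real issue is that the given norm may satisfy $\|X\|\ge R$, so the domination step is unavailable directly. Here I would invoke the standard fact that for every $\epsilon>0$ there is a norm $\|\cdot\|_\epsilon$, equivalent to the original, with $\|X\|_\epsilon\le\rho(X)+\epsilon$ (obtained, e.g., by rescaling the basis that brings $X$ to Jordan form). Choosing $\epsilon$ with $\rho(X)+\epsilon<R$ reduces convergence to the first case, and equivalence of norms transfers absolute convergence back. The contour argument then goes through verbatim with $\gamma$ of radius $r\in(\rho(X),R)$, since the Laurent expansion of the resolvent about $z=\infty$ converges for $|z|>\rho(X)$ by Gelfand's formula $\rho(X)=\lim_n\|X^n\|^{1/n}$. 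Alternatively, one may bypass the integral entirely via the Jordan canonical form: on a block $\lambda I+N$ with $N$ nilpotent, $\sum_n a_n(\lambda I+N)^n$ collapses to the finite sum $\sum_{k}\frac{f^{(k)}(\lambda)}{k!}N^k$, which is exactly the primary-matrix-function definition of $f$, and convergence of each derivative series $f^{(k)}$ at $\lambda$ holds precisely because $|\lambda|<R$.

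The main obstacle is the careful handling of the gap between $\|X\|$ and $\rho(X)$: one must ensure that $f(X)$ is even well defined (which needs $\sigma(X)$ inside the domain of analyticity, i.e.\ $\rho(X)<R$, not merely $\|X\|<R$) and that the resolvent expansion used in the interchange is valid on the chosen contour, for which the relevant radius is $\rho(X)$ via Gelfand's formula rather than $\|X\|$. The remaining ingredients---the domination estimate, completeness of the matrix algebra, and the Cauchy coefficient formula---are routine.
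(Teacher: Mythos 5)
Your proposal is correct and follows essentially the same route as the paper: the norm case by dominating $\sum_n\|a_nX^n\|$ with the scalar series, and the spectral case via the Cauchy-integral (holomorphic functional calculus) representation with an interchange of sum and integral. The only differences are cosmetic—you expand the resolvent in its Neumann/Laurent series and invoke the Cauchy coefficient formula where the paper expands $f(z)$ under the integral, and you spell out the equivalent-norm (Jordan rescaling) and Gelfand-formula details that the paper leaves implicit—so no gap to report.
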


\begin{proof}
Absolute convergence for $\|X\|<R$ follows from
\(
\sum_{n\ge 0}\|a_n X^n\|\le \sum_{n\ge 0} |a_n|\,\|X\|^n,
\)
which converges because the scalar series does. 
If only $\sigma(X)\subset D_R$ is assumed, let $\Gamma\subset D_R$ be a simple closed contour enclosing $\sigma(X)$. 
Then the holomorphic functional calculus defines
\begin{equation}
f(X)=\frac{1}{2\pi i}\oint_\Gamma f(z)\,(zI-X)^{-1}\,dz .
\end{equation}
Since the scalar series $f(z)=\sum a_n z^n$ converges uniformly on $\Gamma$, one may interchange summation and integration to obtain
\begin{equation}
f(X)=\sum_{n=0}^\infty a_n X^n .
\end{equation}
\end{proof}

\begin{theorem}[Convergence and identification with the principal square root]\label{thm:root}
Let $X$ be a linear operator with either $\|X\|<1$ or $\rho(X)<1$. 
Then the series
\[
S(X)\;:=\;\sum_{n=0}^{\infty}\binom{1/2}{n}\,X^n
\]
converges absolutely and equals the \emph{principal} square root of $I+X$, i.e.
\[
S(X)^2=I+X,
\qquad
S(X)=(I+X)^{1/2},
\]
where the principal branch is the analytic continuation of $z\mapsto \sqrt{1+z}$ from $|z|<1$ with branch cut on $(-\infty,-1]$.
\end{theorem}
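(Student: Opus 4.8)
The plan is to prove the two assertions separately---absolute convergence first, then the identification with the principal branch---leaning on the power--series functional calculus lemma established just above.

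For convergence I would begin by noting that the scalar Taylor series $\sum_{n\ge0}\binom{1/2}{n}z^n$ has radius of convergence exactly $1$: since $\binom{1/2}{n}=O(n^{-3/2})$, the root test gives $R=1$ (and in fact absolute convergence on $|z|=1$, which is not needed here). The lemma with $R=1$ then yields absolute convergence of $S(X)=\sum_n\binom{1/2}{n}X^n$ whenever $\|X\|<1$ in some submultiplicative norm, with $S(X)=f(X)$ for $f(z)=\sqrt{1+z}$ the principal branch. To cover the weaker hypothesis $\rho(X)<1$, I would invoke the standard fact that for every $\varepsilon>0$ there is a submultiplicative operator norm with $\|X\|\le\rho(X)+\varepsilon$; choosing $\varepsilon$ with $\rho(X)+\varepsilon<1$ reduces to the previous case, and absolute convergence then holds in every norm by finite--dimensional equivalence. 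Equivalently, one may use directly the $\sigma(X)\subset D_1$ clause of the lemma, which defines $f(X)$ by a Cauchy contour integral.

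For the relation $S(X)^2=I+X$, the shortest route is multiplicativity of the holomorphic functional calculus: because $f(z)^2=1+z$ is a polynomial, $S(X)^2=f(X)^2=(f^2)(X)=I+X$. A fully elementary alternative, licensed by the absolute convergence just obtained, is to rearrange the Cauchy product: the coefficient of $X^m$ in $S(X)^2$ is $\sum_{k=0}^{m}\binom{1/2}{k}\binom{1/2}{m-k}=\binom{1}{m}$ by the Chu--Vandermonde identity, which equals $1$ for $m\in\{0,1\}$ and $0$ for $m\ge2$---again $S(X)^2=I+X$.

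The genuinely substantive step---and the one I expect to be the main obstacle---is showing that $S(X)$ is the \emph{principal} square root, not merely one of the several algebraic square roots of $I+X$. Here the plan is to use the spectral mapping theorem: $\sigma\big(S(X)\big)=\{\sqrt{1+\lambda}:\lambda\in\sigma(X)\}$ with the principal scalar root, and since $|\lambda|<1$ the point $1+\lambda$ lies in the open disk $\{|w-1|<1\}$, hence in the open right half--plane, so $\sqrt{1+\lambda}$ lies in the sector $\{|\arg w|<\pi/4\}$ and in particular has positive real part. Thus $S(X)$ is a square root of $I+X$ whose spectrum lies in the open right half--plane, and by the uniqueness characterization of the principal matrix square root---applicable because $\sigma(I+X)$ avoids $(-\infty,0]$---this forces $S(X)=(I+X)^{1/2}$. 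No separate argument is needed for non--diagonalizable $X$: the holomorphic functional calculus and the spectral mapping theorem both apply verbatim on Jordan blocks. As a byproduct one obtains the promised remainder bound $\big\|\sum_{n>N}\binom{1/2}{n}X^n\big\|\le\sum_{n>N}\big|\binom{1/2}{n}\big|\,\|X\|^n$, a tail of a convergent series with the expected geometric--type control.
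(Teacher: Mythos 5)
Your proof is correct and takes essentially the same route as the paper: the power--series functional calculus lemma for absolute convergence (handling $\rho(X)<1$ via a norm with $\|X\|\le\rho(X)+\varepsilon$ or the contour--integral clause), followed by the spectral mapping theorem to place $\sigma(S(X))$ in the open right half--plane and identify the principal root. Your explicit appeal to the uniqueness characterization of the principal square root, and the Chu--Vandermonde Cauchy--product alternative for $S(X)^2=I+X$, merely spell out steps the paper's proof states more tersely.
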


\begin{proof}
For $f(z)=(1+z)^{1/2}$, the Taylor series at $z=0$ is 
$f(z)=\sum_{n\ge 0}\binom{1/2}{n} z^n$, with radius of convergence $1$.  
If $\|X\|<1$, the lemma applies directly and $S(X)=f(X)$.  
If $\rho(X)<1$, then $\sigma(X)\subset D_1$ and by the holomorphic functional calculus the same equality holds.  
Since $\sigma(I+X)=\{1+\lambda:\lambda\in\sigma(X)\}$ lies in $\mathbb C\setminus (-\infty,0]$, the principal branch of $\sqrt{\cdot}$ is analytic on a domain containing $\sigma(I+X)$, so $S(X)^2=I+X$ and $\sigma(S(X))=\{\sqrt{1+\lambda}:\lambda\in\sigma(X)\}$ with $\Re\sqrt{1+\lambda}>0$.
\end{proof}

\begin{proposition}[Absolute convergence and remainder bound]\label{prop:remainder}
Fix $r$ with $\|X\|<r<1$. By Cauchy's estimates,
\[
M_r = \max_{|z|=r} |(1+z)^{1/2}| = 1+r,
\]
since $(1+z)^{1/2}$ is subharmonic and therefore attains its maximum on the boundary $|z|=r$.  
Hence,

\begin{equation}
\Big\| (I+X)^{1/2}-\sum_{n=0}^{N}\binom{1/2}{n}X^n \Big\|
\;\le\; \frac{M_r}{r^{N+1}}\,
\frac{\|X\|^{N+1}}{1-\|X\|/r},
\end{equation}
so that convergence is geometric for every fixed $\|X\|<1$.
\end{proposition}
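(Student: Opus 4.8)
The plan is to reduce the matrix remainder bound to a scalar tail estimate and then apply Cauchy's coefficient inequalities to the holomorphic function $f(z)=(1+z)^{1/2}$ on the open unit disk. First I would invoke submultiplicativity of the norm: writing $a_n=\binom{1/2}{n}$ and using the identification $\sum_{n\ge0}a_nX^n=(I+X)^{1/2}$ already established in Theorem~\ref{thm:root} (valid for $\|X\|<1$), one has
\[
\Big\| (I+X)^{1/2}-\sum_{n=0}^{N}a_n X^n \Big\|
=\Big\|\sum_{n=N+1}^{\infty} a_n X^n\Big\|
\le \sum_{n=N+1}^{\infty} |a_n|\,\|X\|^{n}.
\]
No interchange of limits is needed here; it suffices to control the tail of an absolutely convergent scalar series, whose convergence for $\|X\|<1$ is itself the content of Lemma~\ref{lemma:BinomialConvergence}.

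Next I would bound the Taylor coefficients via Cauchy's estimates. Since $f(z)=(1+z)^{1/2}$ is holomorphic on $\{|z|<1\}$, for any $r$ with $\|X\|<r<1$ the integral formula on the circle $|z|=r$ gives $a_n=\frac{1}{2\pi i}\oint_{|z|=r}f(z)\,z^{-n-1}\,dz$, hence $|a_n|\le M_r\,r^{-n}$ with $M_r=\max_{|z|=r}|f(z)|$. To pin down $M_r$ I would use that $|f|$ is subharmonic (modulus of a holomorphic function), so its maximum over the closed disk is attained on the boundary; on $|z|=r$ one has $|1+z|\le 1+r$ with equality at $z=r$, so in fact $M_r=|1+r|^{1/2}=\sqrt{1+r}$. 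I note that the value quoted in the statement, $M_r=1+r$, is a harmless overestimate ($\sqrt{1+r}\le 1+r$), so either choice yields a valid bound.

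Finally I would sum the resulting geometric tail: with $q\equiv\|X\|/r<1$,
\[
\sum_{n=N+1}^{\infty}|a_n|\,\|X\|^{n}
\le M_r\sum_{n=N+1}^{\infty}q^{\,n}
=M_r\,\frac{q^{\,N+1}}{1-q}
=\frac{M_r}{r^{N+1}}\,\frac{\|X\|^{N+1}}{1-\|X\|/r},
\]
which is precisely the claimed estimate; since $q<1$, the right-hand side decays geometrically in $N$ for every fixed $\|X\|<1$, so this simultaneously re-establishes absolute convergence and fixes the geometric rate. For the spectral version $\rho(X)<1$ one first passes, as in Lemma~\ref{lemma:BinomialConvergence}, to an equivalent submultiplicative norm in which $\|X\|<1$, and then applies the same bound.

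I do not expect a genuine obstacle here: this is a standard Cauchy-estimate argument. The only point needing a little care is the constant $M_r$ — since the stated value $M_r=1+r$ overestimates the true boundary maximum $\sqrt{1+r}$, I would either keep the crude value or tighten it, both giving a correct remainder inequality. A minor bookkeeping issue is to make sure the identification $\sum_n a_n X^n=(I+X)^{1/2}$ is cited from Theorem~\ref{thm:root} rather than re-proved, so that the proposition is genuinely a remainder estimate layered on top of already-established convergence.
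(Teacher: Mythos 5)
Your proof is correct and follows essentially the same route as the paper: Cauchy coefficient estimates $|a_n|\le M_r\,r^{-n}$ obtained from the integral formula on the circle $|z|=r$, followed by summing the geometric majorant of the tail of the series. Your observation that the true boundary maximum is $M_r=\sqrt{1+r}$ rather than the quoted $1+r$ is well taken; the paper's larger constant is a harmless overestimate, so the stated remainder inequality remains valid either way.
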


\begin{proof}
Cauchy’s integral formula for $f(z)=(1+z)^{1/2}$ on $|z|=r$ gives
\begin{equation}
a_n=\binom{1/2}{n}=\frac{1}{2\pi i}\oint_{|z|=r}\frac{f(z)}{z^{n+1}}\,dz,
\end{equation}
and therefore $|a_n|\le M_r r^{-n}$. Summing the resulting majorant yields the stated bound.
\end{proof}

\begin{corollary}[Application to the Palatini map]
Let
\[
\Sigma=\tfrac{f_R}{2}I+\sqrt{\alpha I+\beta T},
\qquad 
\alpha=\tfrac14(f_R^2+4f_Q f),\quad \beta=2\kappa^2 f_Q,
\]
and define $X=(\beta/\alpha)T$. 
If $\rho(X)<1$, the binomial series for $\sqrt{\alpha I+\beta T}$ converges absolutely. 
Moreover, writing
\begin{equation}
Y \;=\; \frac{\sqrt{\alpha}}{A_0}\,\big((I+X)^{1/2}-I\big),
\qquad 
A_0=\frac{f_R}{2}+\sqrt{\alpha},
\end{equation}
the spectral mapping theorem implies $\rho(Y)<1$ because 
$|\sqrt{1+\lambda}-1|<1$ for all $|\lambda|<1$.
Hence the Neumann and Mercator series
\begin{equation}
(I+Y)^{-1}=\sum_{m=0}^{\infty}(-1)^m Y^m,
\qquad
\log(I+Y)=\sum_{k=1}^{\infty}\frac{(-1)^{k+1}}{k}\,Y^k,
\end{equation}
converge absolutely, and therefore
\begin{equation}
\det\Sigma=A_0^{4}\exp[\Tr\log(I+Y)].
\end{equation}
Consequently, $h_{\mu\nu}=g_{\mu\alpha}\Sigma^\alpha{}_\nu/\sqrt{\det\Sigma}$ admits a convergent expansion in the tensor basis $\{g_{\mu\nu},T_{\mu\nu},(T^2)_{\mu\nu},\ldots\}$.
\end{corollary}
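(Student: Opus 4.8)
The plan is to assemble the statement from results already proved in this appendix and in Section~\ref{sec:exact-series}, so that the corollary reduces to routine propagation of absolute convergence through the operations (series inversion, $\Tr\log$, $\exp$, matrix multiplication) that build $h_{\mu\nu}$ out of $\Sigma$. Item~(1) is immediate: by Theorem~\ref{thm:root} (equivalently Lemma~\ref{lemma:BinomialConvergence}), the hypothesis $\rho(X)<1$ makes $\sum_{n\ge0}\binom{1/2}{n}X^n$ converge absolutely to the principal root $(I+X)^{1/2}$; factoring $\sqrt{\alpha I+\beta T}=\sqrt\alpha\,(I+X)^{1/2}$ and writing $\Sigma=\tfrac{f_R}{2}I+\sqrt\alpha\,(I+X)^{1/2}=A_0 I+\sqrt\alpha\big((I+X)^{1/2}-I\big)=A_0(I+Y)$ then fixes $Y=\sum_{n\ge1}b_nX^n$ with $b_n=\tfrac{\sqrt\alpha}{A_0}\binom{1/2}{n}$, exactly as used in the main text.

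For item~(2) I would apply the spectral mapping theorem to the composite holomorphic function $g(\lambda)=\tfrac{\sqrt\alpha}{A_0}\big(\sqrt{1+\lambda}-1\big)$, which is analytic on a neighbourhood of $\sigma(X)$ (the principal branch of $\sqrt{1+\lambda}$ is holomorphic on the open disk $|\lambda|<1$, where $\Re(1+\lambda)>0$): this gives $\sigma(Y)=g(\sigma(X))$ and hence $\rho(Y)=\big|\tfrac{\sqrt\alpha}{A_0}\big|\,\max_{\lambda\in\sigma(X)}|\sqrt{1+\lambda}-1|$. Since $\sigma(X)$ is a compact subset of the open unit disk, the maximum modulus principle bounds $\max_{|\lambda|\le\rho(X)}|\sqrt{1+\lambda}-1|$ by its value on $|\lambda|=\rho(X)$, and a short parametrisation $\lambda=\rho(X)e^{i\phi}$ shows this value equals $1-\sqrt{1-\rho(X)}<1$ (attained at $\phi=\pi$). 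Combined with the physically automatic bound $0<\sqrt\alpha\le A_0=\tfrac{f_R}{2}+\sqrt\alpha$ (valid whenever $f_R\ge0$ and $\alpha>0$, which covers GR, the quadratic model, and the examples of this paper), this yields $\rho(Y)<1$.

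Items~(3)--(5) are then bookkeeping. From $\rho(Y)<1$ pick a submultiplicative norm with $\|Y\|<1$; since the scalar series $\sum_{m\ge0}(-1)^mz^m$ and $\sum_{k\ge1}\tfrac{(-1)^{k+1}}{k}z^k$ have radius of convergence $1$, the power--series functional calculus Lemma gives absolute convergence of $(I+Y)^{-1}=\sum_{m\ge0}(-1)^mY^m$ and of $\log(I+Y)=\sum_{k\ge1}\tfrac{(-1)^{k+1}}{k}Y^k$. Then $\det\Sigma=\det(A_0I)\det(I+Y)=A_0^{4}\exp\big(\Tr\log(I+Y)\big)$, the $\det$--$\exp$--$\Tr$ identity being legitimate because $\sigma(I+Y)\subset\{|w-1|<1\}\subset\mathbb C\setminus(-\infty,0]$ so the principal matrix logarithm is well defined there. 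Finally $\Sigma^\alpha{}_\nu=A_0(I+Y)^\alpha{}_\nu$ is an absolutely convergent matrix power series in $Y$, hence in $X\propto T$; the scalar $\sqrt{\det\Sigma}=A_0^{2}\exp\big[\tfrac12\Tr\log(I+Y)\big]$ is an absolutely convergent series in $T$ with nonzero constant term $A_0^{2}$, so its reciprocal $\tfrac{1}{A_0^{2}}\exp\big[-\tfrac12\Tr\log(I+Y)\big]$ is again absolutely convergent; lowering one index with $g_{\mu\alpha}$ and forming the Cauchy product of these two absolutely convergent series produces the claimed convergent expansion of $h_{\mu\nu}$ in the tensor basis $\{g_{\mu\nu},T_{\mu\nu},(T^2)_{\mu\nu},\ldots\}$ (which, by Cayley--Hamilton in $d=4$, reduces further to scalar combinations of $g_{\mu\nu},T_{\mu\nu},(T^2)_{\mu\nu},(T^3)_{\mu\nu}$).

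I expect item~(2) to be the main obstacle: it is the one step that goes beyond ``absolute convergence is preserved'', requiring the spectral mapping theorem for the holomorphic functional calculus together with the sharp estimate $\max_{|\lambda|\le r}|\sqrt{1+\lambda}-1|=1-\sqrt{1-r}$, and also requiring care that in Lorentzian signature $T^\mu{}_\nu$ need not be $g$--symmetric or diagonalizable, so ``eigenvalues'' must be read through the Jordan form and the spectral radius $\rho(\cdot)$, precisely as in the remarks following Theorem~\ref{thm:PalatiniConvergence}. Everything downstream follows because $z\mapsto(1+z)^{-1}$, $\log(1+z)$ and $\exp(z)$ are holomorphic on the relevant domains, so their compositions and products inherit absolute convergence.
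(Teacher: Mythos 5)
Your proposal is correct and follows essentially the same chain of reasoning as the paper: absolute convergence of the binomial series for the principal root (Theorem~\ref{thm:root}), the spectral mapping theorem to obtain $\rho(Y)<1$, the Neumann and Mercator series for $(I+Y)^{-1}$ and $\log(I+Y)$, the $\det$--$\exp$--$\Tr\log$ identity on a domain avoiding the branch cut, and a Cauchy product to propagate convergence into $h_{\mu\nu}$. In fact your handling of the step $\rho(Y)<1$ is slightly more careful than the corollary's own wording, since you make explicit the prefactor condition $0<\sqrt{\alpha}\le A_0$ (i.e.\ $f_R\ge 0$, $\alpha>0$) and the sharp boundary estimate $\max_{|\lambda|\le\rho(X)}\,\lvert\sqrt{1+\lambda}-1\rvert=1-\sqrt{1-\rho(X)}$, both of which the paper leaves implicit.
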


\paragraph*{Remarks on branch and non--diagonalizable cases.}
(i) The condition $\rho(X)<1$ ensures $\sigma(I+X)\subset\mathbb C\setminus(-\infty,0]$, so the principal branch of the square root is well defined and unique.  
(ii) If $T$ is not diagonalizable, the Schur or Jordan decomposition and the holomorphic functional calculus yield the same principal root; noncommutativity does not affect the convergence of the series.

\end{document}